\def\ps@headings{%
\def\@oddhead{\mbox{}\scriptsize\rightmark \hfil \thepage}%
\def\@evenhead{\scriptsize\thepage \hfil \leftmark\mbox{}}%
\def\@oddfoot{}%
\def\@evenfoot{}}
\begin{document}
\bibliographystyle{IEEEtran}

\title{Uncoded Placement Optimization for Coded Delivery}

\author{Sian Jin,~\IEEEmembership{Student~Member,~IEEE}\thanks{S. Jin, Y. Cui and H. Liu are with Shanghai Jiao Tong University, China. G. Caire is with Technical University of Berlin, Germany. This
paper was presented in part at IEEE WiOpt 2018~\cite{conference}.}, \ Ying Cui,~\IEEEmembership{Member,~IEEE}, \\ Hui Liu,~\IEEEmembership{Fellow,~IEEE},  \ Giuseppe Caire,~\IEEEmembership{Fellow,~IEEE}}

\maketitle
\newtheorem{Thm}{Theorem}
\newtheorem{Lem}{Lemma}
\newtheorem{Sta}{Statement}
\newtheorem{Cor}{Corollary}
\newtheorem{Def}{Definition}
\newtheorem{Exam}{Example}
\newtheorem{Alg}{Algorithm}
\newtheorem{Sch}{Scheme}
\newtheorem{Prob}{Problem}
\newtheorem{Rem}{Remark}
\newtheorem{Proof}{Proof}
\newtheorem{Asump}{Assumption}
\newtheorem{Subp}{Subproblem}
\newtheorem{Con}{Condition}

\vspace{-1cm}

\begin{abstract}
We consider the classical coded caching problem as defined by Maddah-Ali and Niesen, where a server with a library of $N$  files  of equal size is connected to $K$ users via a shared error-free    link.
Each user is  equipped with a cache with capacity of    $M$ files.
The goal  is to design    a static content placement and delivery  scheme  such that the average load over the shared link is minimized.
Existing coded caching schemes fail to simultaneously achieve efficient content placement for non-uniform file popularity and efficient content delivery in the presence of common requests, and hence may not achieve desirable average load under a non-uniform, possibly very skewed, popularity distribution.
In addition, existing coded caching schemes usually require the splitting of a file into a large number of subfiles, i.e., high subpacketization level.
To address the above two challenges, we first present a class of centralized coded caching schemes consisting of a general content placement strategy specified by a file partition parameter, enabling efficient and flexible content placement, and a specific content delivery strategy, enabling load reduction by exploiting common requests of different users.
For the proposed class of schemes, we consider two cases for the optimization of the file partition parameter, depending on whether a large subpacketization level is allowed or not.
In the case of an unrestricted subpacketization level, we formulate the coded caching optimization in order to  minimize the average load under an arbitrary file popularity.
A direct formulation of the problem involves $N2^K$ variables.
By imposing some additional conditions, the problem is reduced to a linear program with $N(K+1)$ variables under an arbitrary file popularity and with $K+1$ variables under the  uniform file popularity.
We can recover Yu {\em et al.}'s optimal scheme for the uniform file popularity as an optimal solution of our  problem.
When a low subpacketization level is desired, we introduce  a subpacketization level constraint involving the $\ell_0$ norm  for each file.
Again, by imposing  the same additional conditions, we can  simplify   the problem to a difference of two convex functions  (DC) problem with $N(K+1)$ variables  that can be efficiently solved.
\end{abstract}

\begin{keywords}
Coded caching, coded multicasting, content distribution, optimization, subpacketization  level, the $\ell_0$-norm.
\end{keywords}

\section{Introduction}
The rapid proliferation of smart mobile devices has triggered
an unprecedented growth of the global mobile data traffic, with a predicted nearly seven-fold
increase between 2016 and 2021 \cite{Cisco}. In order to support such
dramatic growth of wireless data traffic, caching and multicasting  have been recently
proposed as two promising approaches for massive content delivery in wireless networks.
Joint design of the two promising techniques is expected to achieve superior performance for massive content delivery in wireless networks.
In  \cite{multicast2014,CuiTWC1,CuiTWC2},  the authors  consider joint design of traditional uncoded caching and multicasting, the gain of which mainly derives from making content available locally and serving multiple
requests of the same contents concurrently.

Recently, a new class of caching schemes for content placement in user caches, referred to as {\em coded caching}~\cite{AliFundamental},  have received significant interest.
In~\cite{AliFundamental}, Maddah-Ali and Niesen consider a system with one server connected  through a shared error-free link to $K$ users.
The server has a database of $N$ files (of  $F$ data units), and each user  has an isolated cache
memory containing up to $M$ files.
They formulate a caching problem consisting of two phases, namely,
a content placement phase and a content delivery phase.
The content placement is performed once, before operating the network, and independently of the user requests.
Then, the users place requests in rounds, and at each round the server responds with
a   multicast  message  constructed by coded multicast XOR operations  that satisfies all user requests simultaneously.
The goal of~\cite{AliFundamental} is to reduce the worst-case (over all possible requests) load of the shared link in the delivery phase.
In~\cite{jin},    we  consider a   different class of  centralized  coded caching schemes   specified by a  general file partition parameter, and optimize the parameter to minimize the average  (over random requests)  load  within the class under an  arbitrary file popularity.
In~\cite{Wei_Yu}, the parameter-based coded caching design approach in~\cite{jin} is generalized to minimize the average load in a heterogeneous setting with nonuniform   cache size and file size  under an arbitrary file popularity.
In~\cite{YuQian},  Yu {\em et al.} propose a centralized coded caching scheme where the delivery strategy   exploits the chance of load reduction  in common requests of different users  and prove its  information theoretic  optimality   for the worst-case  load  and average load under the uniform file  popularity.

Note that the delivery strategies in~\cite{AliFundamental,jin,Wei_Yu} do not capture the opportunity of load reduction in  common requests of different users, and the placement strategies in~\cite{AliFundamental} and~\cite{YuQian} allocate the same fraction of memory to each file without reflecting popularity difference  of files.
Therefore, the coded caching schemes in~\cite{AliFundamental,jin,Wei_Yu, YuQian} may not achieve desirable average load under a non-uniform, possibly very skewed, popularity distribution.
At the moment, a general  optimality result for random requests with an arbitrary file popularity is not known.

Another limitation of~\cite{AliFundamental,jin,Wei_Yu, YuQian} is the  issue of high subpacketization  level,  i.e.,  the  number of  non-overlapping  subfiles  for  each  file is large.
In~\cite{Tang,Shanmugam17,cheng},  the authors  tackle   the subpacketization  level   issue for  centralized coded caching.
Specifically,   in~\cite{Tang},    Tang {\em et al.} connect coded caching to   resolvable combinatorial designs and propose a  centralized coded caching  scheme  where  the  subpacketization  level  is exponential with respect to  (w.r.t.)  the number of users  but  with a smaller exponent constant than in  the  centralized coded caching  scheme of~\cite{AliFundamental} at the cost of a marginal increase in the worst-case load.
In~\cite{Shanmugam17},  Shanmugam {\em et al.} connect coded caching to  Ruzsa-Szem{\'{e}}redi graphs and   show  the existence of  a   centralized  coded caching scheme  where the subpacketization  level  grows linearly with the number of users.
However, such scheme exists only when the number of users is impractically large.
In~\cite{cheng}, the authors propose a  centralized coded caching scheme with low subpacketization  level    based on Pareto-optimal placement delivery array   (PDA).
Note that  the   centralized  coded caching schemes in~\cite{Tang,Shanmugam17,cheng}  addressing the subpacketization  level   issue are    applicable only for certain system parameters (e.g., the number of users, the number of files,   cache size, etc.).
Furthermore,   the  centralized   coded caching schemes in~\cite{Tang,Shanmugam17,cheng}    are  based on combinatorial designs
and do not explicitly solve any optimization problem under subpacketization level constraints.

In this paper, we would like to address the above challenges  in the same centralized setting as in~\cite{AliFundamental, jin,Wei_Yu, YuQian,Tang,Shanmugam17,cheng},  with the focus on minimizing  the average load under an arbitrary file popularity   in two  cases, namely, the case without considering the subpacketization  level  issue and the  case  considering the subpacketization  level  issue.
We  present a class of coded caching schemes  consisting of a general content placement  strategy specified by  a  file partition parameter, enabling efficient and flexible content placement, and  a specific content delivery  strategy, enabling  load reduction by exploiting  common requests  of different users.
Then,  we focus on the average load minimization irrespectively of the subpacketization  level  issue. In this case,  we  formulate the   coded caching  optimization problem over the considered class of schemes   to minimize the average load under an arbitrary file popularity.
The  average load expression is not tractable due to the complex delivery strategy.
Therefore,  we impose some  additional  conditions on the  parameter  to  simplify  the  average load expression  under an arbitrary  file popularity and  the  uniform file popularity respectively, by connecting the file request event   to the ``balls into bins'' problem.
Based on  the simplified  expressions, we transform the original optimization problem with  $N2^K$ variables    into a   linear program with $N(K+1)$ variables under an arbitrary  file popularity  and a  linear program with $K+1$  variables under the  uniform file popularity, which are much easier to solve than the original problem.
We also show that Yu {\em et al.}'s centralized coded caching scheme corresponds to  an optimal solution  of our problem,  thus implying that the imposed conditions incur no loss of optimality for the uniform file popularity.
Next, we  focus on  the average load minimization considering the  subpacketization  level  issue. In this case, we first  formulate the   coded caching  optimization problem over the considered class of schemes  to minimize the average load under an arbitrary file popularity  subject to   subpacketization  level  constraints in terms of  the $\ell_0$-norm of the file partition parameter.
To the best of our knowledge, this is the first work  explicitly considering   subpacketization  level  constraints in  the optimization of   coded caching design.
By  imposing  the same additional  conditions as  before  and using the exact difference of two convex functions  (DC)   reformulation method in~\cite{dc2017},   we convert the original problem with $N2^K$ variables   into a simplified DC problem with $N(K+1)$ variables.
Then, we use a    DC algorithm to solve the simplified DC problem.
Numerical results reveal  that   the imposed conditions do not affect the optimality of the original problem    under an arbitrary file popularity  in both   cases.
Furthermore,  our  numerical  results  demonstrate that the optimized coded caching scheme  without considering the subpacketization  level  constraints  outperforms those in~\cite{AliFundamental, jin, YuQian}  in terms of the average load, and   the optimized coded caching scheme  considering the subpacketization  level  constraints outperforms those in~\cite{Tang} and~\cite{cheng} in terms of  both  the average load and application region.

\section{Centralized  Coded Caching}\label{Sec:scheme}
\subsection{Problem Setting}\label{Sec:setting}
As in~\cite{AliFundamental, jin,Wei_Yu, YuQian,Tang,Shanmugam17,cheng}, we consider a system with one server connected through a shared error-free link to $K \in \mathbb N_{>0}$  users  (see Fig.~\ref{fig:system_model}),
where $\mathbb N_{>0}$   denotes the set of all positive integers.
The server has access to a  library  of $N \in \mathbb N_{>0}$
files, denoted by $W_1, \ldots ,W_N$, each consisting of $F \in \mathbb N_{>0}$ indivisible data units.
Let  $\mathcal{N}\triangleq \{1,2,\ldots ,N\}$ and $\mathcal{K} \triangleq \{1,2, \ldots K\}$ denote the set of file indices and the set of user indices, respectively.
Each user  has an isolated cache memory of $MF$ data units, for some real number $M \in [0,N]$.
Let $Z_k$ denote the  cache content for user $k$.
The system operates in two phases, i.e., a placement phase and a delivery phase~\cite{AliFundamental}.
In the placement phase, each user is  able to fill the content of its cache using the library  of $N$ files.
In the delivery phase, each user randomly and independently requests one file in $\mathcal N$ according to file popularity distribution  $\mathbf p \triangleq \left(p_n\right)_{n=1}^{N}$, where $p_n$ denotes the probability of a user requesting file $W_n$ and $\sum_{n=1}^{N}p_n =1$. Without loss of generality, we assume $p_1 \geq p_2 \geq \ldots \geq p_N$.
Let $D_{k}\in\mathcal N$ denote the index of the file requested by user $k \in \mathcal{K} $, and
let $\mathbf D\triangleq \left(D_1, \cdots, D_K\right)\in \mathcal N^K$ denote the requests of  all the $K$ users.
The server replies to these $K$ requests by sending   messages  over the shared link, which are observed by all the $K$ users.
Each user  should be   able to recover its requested file from the messages received over the shared link and its cache content.
Our goal is to  minimize  the average load  of the shared link  under an arbitrary file popularity.

\begin{figure}
\begin{center}
  {\resizebox{9cm}{!}{\includegraphics{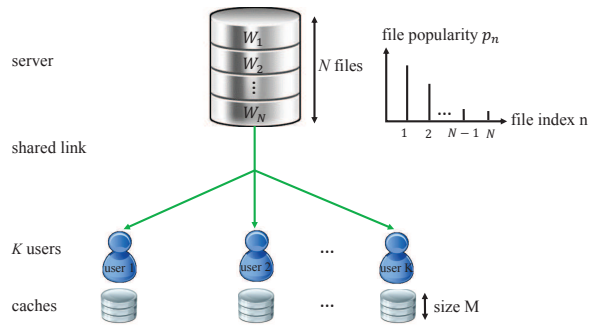}}}
         \caption{\small{Problem setup for coded caching~\cite{jin}.}
         }\label{fig:system_model}
\end{center}
\end{figure}
\subsection{Centralized  Coded Caching Scheme}
In this  part,  we  present   a  class of  centralized  coded caching schemes    utilizing  a  general  uncoded placement strategy and  a specific coded delivery strategy, which  are specified  by a  general  file partition parameter, as  summarized in  Alg.~\ref{alg:col}.
This uncoded placement strategy was introduced in our previous work~\cite{jin} and it is repeated here for completeness.
For all $n \in \mathcal N$, file $W_n$  is  partitioned into  $2^K$ nonoverlapping   subfiles $W_{n,\mathcal{S}}$, $\mathcal{S} \subseteq \mathcal{K}$, i.e., $W_n=\left\{W_{n,\mathcal{S}}: \mathcal{S} \subseteq \mathcal{K}\right\}.$
If the number of data units in a subfile is zero, then there is no need to consider this subfile.
Thus,  $2^K$ is the maximum  number of non-overlapping subfiles  of a file.
We say subfile  $W_{n,\mathcal{S}}$ is of type $s$ if $|\mathcal{S}|=s$~\cite{jin}.
User $k$ stores $W_{n,\mathcal{S}}$, $n \in \mathcal N ,k \in \mathcal{S}, \mathcal{S} \subseteq \mathcal{K}$ in its cache, i.e., $Z_k=\left\{W_{n,\mathcal{S}}:n \in \mathcal N ,k \in \mathcal{S}, \mathcal{S} \subseteq \mathcal{K}\right\}.$
Let $x_{n,\mathcal{S}}$ denote  the   size of   subfile   $W_{n,\mathcal{S}}$, normalized by the file size $F$.
Denote  $\mathbf{x}_n \triangleq \left(x_{n,\mathcal{S}}\right)_{\mathcal{S} \subseteq \mathcal{K}}$.
Let   $\mathbf{x} \triangleq (\mathbf{x}_n)_{n \in \mathcal{N}}$   denote the  file partition parameter, which   will be optimized  to minimize the average  load in Section~\ref{Sec:optimization} and Section~IV.
Thus, $\mathbf{x}$ satisfies
\begin{align}
&0 \leq x_{n, \mathcal{S}} \leq 1 , \quad \forall \mathcal{S} \subseteq \mathcal{K}, \ n \in \mathcal{N},\label{eqn:X_range}\\
&\sum_{s=0}^{K} \sum_{{\cal S} \subseteq {\cal K} : |{\cal S}| = s}  x_{n, \mathcal{S}}=1 , \quad  \forall n \in \mathcal{N}, \label{eqn:X_sum}\\
&\sum_{n=1}^{N}\sum_{s=1}^{K}\sum_{{\cal S} \subseteq {\cal K} : |{\cal S}| = s, k \in {\cal S}}  x_{n, \mathcal{S}} \leq M, \quad  \forall k \in \mathcal K, \label{eqn:memory_constraint}
\end{align}
where \eqref{eqn:X_range} and  \eqref{eqn:X_sum} represent the file partition constraints and \eqref{eqn:memory_constraint} represents the cache memory constraint.

The coded delivery strategy is an extension of that  in~\cite{YuQian}.
For all $\mathbf D \in \mathcal N^K$, let $\underline{\mathcal D}(\mathbf D)$ denote the set of distinct files in $\mathbf D$.
For all $n \in \underline{\mathcal D}(\mathbf D)$, the server arbitrarily selects user $k_n \in \mathcal K$ such that $D_{k_n}=n$.
Let  $\underline{\mathcal{K}}(\mathbf D) \triangleq \{k_n:n \in \underline{\mathcal D}(\mathbf D)\}$ denote the set of representative users that request $|\underline{\mathcal D}(\mathbf D)|$ different files.
Each user $k \in \mathcal{S}$ requests  subfile $W_{D_k,\mathcal{S}\setminus \{k\}}$, for all subset $\mathcal{S} \subseteq \mathcal K$.
The server  broadcasts coded-multicast message $\oplus_{k \in \mathcal{S}} W_{D_k,\mathcal{S}\setminus \{k\}}$\footnote{Note that  in~\cite{YuQian}, since all files are partitioned into subfiles of type $t \in \{0,1,\ldots, K\}$, only coded-multicast messages $\oplus_{k \in \mathcal{S}} W_{D_k,\mathcal{S}\setminus \{k\}}$ satisfying $\mathcal{S} \subseteq \mathcal K$, $\mathcal{S} \cap \underline{\mathcal{K}}(\mathbf D)\neq \emptyset$ and $|\mathcal{S}|=t+1$ are transimitted.} for all subset $\mathcal{S} \subseteq \mathcal K$ that satisfies $\mathcal{S} \cap \underline{\mathcal{K}}(\mathbf D)\neq \emptyset$,
and all subfiles in the coded-multicast message are   zero-padded to the length of the longest subfile.
By Lemma~1 of~\cite{YuQian}, we can conclude that each user  can decode the requested file based on  the received coded-multicast messages and  the contents stored in its cache.

\begin{Rem}[Comparison with Existing Coded Caching Schemes]
The uncoded  placement  strategy  in this paper  is more general   than that in~\cite{AliFundamental,YuQian, Tang,Shanmugam17,cheng}, and it can be optimized to minimize  the average load  under an arbitrary file popularity (see Section~\ref{Sec:optimization}  and Section~\ref{Sec:L0})~\cite{jin}.
The coded delivery  strategy  in this paper  is more efficient  than that  in~\cite{AliFundamental,jin,Wei_Yu,Tang,Shanmugam17,cheng},
since it avoids transmitting the redundant coded-multicast messages
$\oplus_{k \in \mathcal{S}} W_{D_k,\mathcal{S}\setminus \{k\}}$, $\mathcal S \subseteq \mathcal K \setminus \underline{\mathcal{K}}(\mathbf D)$ in the presence of common requests~\cite{YuQian}.
\end{Rem}

\begin{algorithm}[h]
\caption{Parameter-based Centralized Coded Caching}
\small{\textbf{placement strategy}}
\begin{algorithmic}[1]
\FORALL {$k \in \mathcal K$}
  \STATE $Z_k \leftarrow \left\{W_{n,\mathcal{S}}:n \in \mathcal N ,k \in \mathcal{S}, \mathcal{S} \subseteq \mathcal{K} \right\}$
\ENDFOR
\end{algorithmic}
\small{\textbf{delivery strategy}}
\begin{algorithmic}[1]
\FOR {$s=K,K-1,\cdots, 1$}
  \FOR {$\mathcal{S} \subseteq \mathcal{K}:|\mathcal{S}|=s,  \mathcal{S} \cap \underline{\mathcal{K}}(\mathbf D)\neq \emptyset$}
     \STATE server sends $\oplus_{k \in \mathcal{S}} W_{D_k,\mathcal{S}\setminus \{k\}}$
  \ENDFOR
\ENDFOR
\end{algorithmic}\label{alg:col}
\end{algorithm}

\subsection{Average Load}
Let $R_{\rm avg}(K,N,M,\mathbf{x})$ denote the average load for serving the $K$ users with cache size $M$ under a given file partition parameter $\mathbf{x}$, where the average is taken over random requests $\mathbf D$  for $N$ files, according to an arbitrary file popularity distribution $\mathbf p$.
By Alg.~\ref{alg:col}, we have
\begin{align}
R_{\rm avg}(K,N,M,\mathbf{x}) =\sum_{\mathbf{d} \in \mathcal{N}^K} \left(\prod_{k=1}^{K}p_{d_k}\right)
\sum_{\mathcal{S}    \subseteq \mathcal{K}:  \mathcal{S} \cap \underline{\mathcal{K}}(\mathbf D)\neq \emptyset} \max_{k \in \mathcal{S}} x_{d_k,\mathcal{S}\setminus\{k\}}, \label{eqn:average_load_1}
\end{align}
where $\mathbf{d} \triangleq (d_1, \ldots, d_K) \in \mathcal N^K$  and $\underset {k \in \mathcal{S}}\max \ x_{d_k,\mathcal{S}\setminus\{k\}}$ is  the length of the coded message $\oplus_{k \in \mathcal{S}} W_{d_k,\mathcal{S}\setminus \{k\}}$, normalized by the file size $F$.

From \eqref{eqn:average_load_1}, we can  observe  that  the file partition parameter  $\mathbf{x}$ fundamentally affects the average load $R_{\rm avg}(K,N,M,\mathbf{x})$.
In Section~\ref{Sec:optimization}, we would like to find an optimal file partition parameter  to minimize  the  average load in \eqref{eqn:average_load_1}.
The optimal file partition parameter may correspond to  high subpacketization  level.
In fact, if for some $n$, all the elements $x_{n, S}$ are non-zero, it means that file $W_n$ is divided  into  $2^K$ subfiles, which is exponential with the number of users  $K$.
For systems with even a moderate number of users and files of practical length, such partition becomes quickly impossible.
For example, for $K = 50$ and the size of each indivisible data unit  equal to $1$ Byte, we need files with size larger than $1$ Petabyte.
To avoid high subpacketization  level,    in Section~\ref{Sec:L0},  we would like to find an optimal file partition parameter  to minimize  the  average load in \eqref{eqn:average_load_1} under  subpacketization  level  constraints.

\section{Average Load Minimization without Subpacketization Level Constraint} \label{Sec:optimization}
In this section, we  consider the minimization of the average load without any restriction on the subpacketization level.

\subsection{Problem Formulation}\label{Sub:formulation}
We would like to minimize the average load under  the file partition constraints in \eqref{eqn:X_range}  and   \eqref{eqn:X_sum} as well as  the cache memory constraint in \eqref{eqn:memory_constraint}.
\begin{Prob}[Optimization for Arbitrary File Popularity]\label{Prob:original}
\begin{align}
R^*_{\rm avg}(K,N,M) \triangleq \min_{\mathbf{x}} \quad &  R_{\rm avg}(K,N,M,\mathbf{x})\nonumber \\
s.t. \quad  &\eqref{eqn:X_range}, \eqref{eqn:X_sum}, \eqref{eqn:memory_constraint},\nonumber
\end{align}
where $R_{\rm avg}(K,N,M,\mathbf{x})$ is given by \eqref{eqn:average_load_1}.
\end{Prob}

The objective function of Problem~\ref{Prob:original} is convex, as it is a positive weighted sum of convex piecewise linear functions.
In addition, the constraints of Problem~\ref{Prob:original} are linear.
Hence, Problem~\ref{Prob:original} is a convex optimization problem.
The number of variables in Problem~\ref{Prob:original} is $N 2^K$.
Thus,  the complexity   of  Problem~\ref{Prob:original}  is huge, especially when $K$ and $N$ are large.
In Section~\ref{Sub:arbitraty} and  Section~\ref{Sub:uniform}, we shall focus on deriving  simplified formulations for Problem~\ref{Prob:original} to facilitate low-complexity optimal solutions under an arbitrary  file  popularity distribution and the uniform  file  popularity distribution, respectively.

\subsection{Optimization for Arbitrary  File  Popularity}\label{Sub:arbitraty}
First,  we  present two   structural conditions  on the file partition parameter $\mathbf x$.
These conditions impose a restriction  on  the feasible region and  enable  a  simplification of Problem~\ref{Prob:original}. We cannot prove that the resulting solution is optimal w.r.t. Problem~\ref{Prob:original} due to its complex objective function,  but we can numerically verify the optimality of the resulting solution.

\begin{Con}[Symmetry w.r.t. Type]\label{Con:symmetry}
For all $n \in \mathcal{N}$ and $s \in \{0,1, \cdots, K\}$,    the  values of
$x_{n, \mathcal{S}}$,  $\mathcal{S}\subseteq \{\mathcal{\widehat{S}} \subseteq \mathcal{K}: |\mathcal{\widehat{S}}|=s\}$  are the  same.
\end{Con}

Recall that all subfiles in one coded-multicast message are  zero-padded to the length of the longest subfile in the coded-multicast message,
 causing the  ``bit waste" effect~\cite{jin}.
Thus, imposing Condition~\ref{Con:symmetry} can  reduce   the  variance of the lengths of messages involved in the coded-multicast XOR operations,   hence  addressing  ``bit waste" problem.
By Condition~\ref{Con:symmetry},   we can set
\begin{align}
x_{n,\mathcal{S}}=y_{n,s}, \quad  \forall \mathcal{S} \subseteq \mathcal{K}, \  n \in \mathcal{N}, \label{eqn:symmetry}
\end{align}
where $s=|\mathcal{S}| \in \{0,1,\cdots, K\}$.
Here,   $y_{n,s}$ can be viewed as  the size of each subfile of type $s$ in each file $W_n$, normalized by the file size $F$.
Let $\mathbf{y}_n \triangleq (y_{n,s})_{s \in \{0,1,\cdots, K\}}$ and $\mathbf{y} \triangleq (\mathbf{y}_n)_{n \in \mathcal{N}}$.

\begin{Con}[Monotonicity w.r.t.  Popularity]\label{Con:popularity}
For all  $n \in \{1,2,\ldots, N-1\}$ and $s \in \{1,2, \cdots, K\}$,  when $p_n \geq p_{n+1}$,
\begin{align}
y_{n,s} \geq  y_{n+1,s}. %\label{eqn:additional_cons}
\end{align}
\end{Con}

Condition~\ref{Con:popularity} indicates that, for all  $n \in \{1,2,\ldots, N-1\}$ and $s \in \{1,2, \cdots, K\}$, when $p_n \geq p_{n+1}$,  the size of subfiles $W_{n,\mathcal{S}}$, $\mathcal{S}\subseteq \{\mathcal{\widehat{S}} \subseteq \mathcal{K}: |\mathcal{\widehat{S}}|=s\}$ is no smaller than that of subfiles $W_{n+1,\mathcal{S}}$, $\mathcal{S}\subseteq \{\mathcal{\widehat{S}} \subseteq \mathcal{K}: |\mathcal{\widehat{S}}|=s\}$.
Intuitively, imposing  Condition~\ref{Con:popularity}  can reduce the average load, by dedicating more memory to a more popular file.
Unlike in our previous work~\cite{jin},  due to the complex   objective function in \eqref{eqn:average_load_1}, we cannot show that imposing Conditions 1 and 2 maintains the optimality of the solution w.r.t. the original Problem~\ref{Prob:original}.
Later, in Section V, we provide numerical evidence    suggesting that indeed Conditions 1 and 2 do not involve any loss of optimality.

Next, we simplify Problem~\ref{Prob:original} under   Conditions~\ref{Con:symmetry} and~\ref{Con:popularity}.
\textcolor{black}{First, we introduce some notations.}
Consider the number of representative users $|\underline{\mathcal{K}}(\mathbf D)|=u$.
Let $\widetilde{D}_{u,\langle1\rangle} \leq \widetilde{D}_{u,\langle 2\rangle} \leq \ldots \leq \widetilde{D}_{u,\langle K-u \rangle}$ denote  $\left(D_k\right)_{k \in \mathcal K \setminus \underline{\mathcal{K}}(\mathbf D)}$ arranged  in ascending order, so that $\widetilde{D}_{u,\langle i\rangle}$ is the $i$-th smallest.
Let $P'_{i,u,n} \triangleq \Pr  \left[\widetilde{D}_{u,\langle i\rangle}=n\right]$, for all $i=1, \ldots, K-u.$
Note that the file request event  $\widetilde{D}_{u,\langle i\rangle}=n$  can be treated as  the ``balls into bins'' problem, i.e., $K$ balls are placed in an  i.i.d.  manner  into $N$ bins,   where bin $n$ is selected with probability $p_n$.
Let  $K_n$ denote  the number of users  requesting file  $n$.
Note that $\sum_{n=1}^{N}K_n=K$ and $\sum_{n=1}^{N}\mathbf 1 [K_{n}>0]=u$.
Let $A_n \triangleq \sum_{n'=1}^{n-1}\mathbf 1 [K_{n'}>0]$, $B_{n,1}  \triangleq \sum_{n'=1}^{n-1}K_{n'}$ and $B_{n,2} \triangleq \sum_{n'=n+1}^{N}K_{n'}$.
Let $\mathcal L_{A_n,1} \triangleq \left\{\mathcal L \subseteq \{1,2,\ldots,n-1\}: |\mathcal L|=A_n\right\}$ and  $\mathcal L_{A_n,2} \triangleq \left\{\mathcal L   \subseteq \{n+1,n+2,\ldots,N\}: |\mathcal L|=u-A_n-1\right\}$.
Let $$\mathcal A_{B_{n,1}, \mathcal L} \triangleq \left\{(\alpha_{n'})_{n' \in \mathcal L} \in \mathbb N_{>0}^{|\mathcal L|}: \sum_{n' \in \mathcal L} \alpha_{n'} =B_{n,1}\right\},$$  for all  $\mathcal L \in \mathcal L_{A_n,1}$  and
$$\mathcal A_{B_{n,2}, \mathcal L} \triangleq \left\{(\alpha_{n'})_{n' \in \mathcal L} \in \mathbb N_{>0}^{|\mathcal L|}: \sum_{n' \in \mathcal L} \alpha_{n'} =B_{n,2}\right\},$$  for all  $\mathcal L \in \mathcal L_{A_n,2}$.
Let $\mathbb N$ denote the set of all natural numbers.
Define ${K \choose m_1,m_2, K-m_1-m_2} \triangleq \frac{K!}{m_1!m_2!(K-m_1-m_2)!}$, where $m_1 \in \mathbb N$, $m_2 \in \mathbb N$ and $m_1+m_2 \leq K$.
\textcolor{black}{By using  results for the  ``balls into bins'' problem and considering Conditions~\ref{Con:symmetry} and~\ref{Con:popularity}, we have the following result.}

\begin{Lem}[Simplification \textcolor{black}{of Problem~\ref{Prob:original} for}  Arbitrary File Popularity]\label{Lem:simplification pop}
\textcolor{black}{Under   Conditions~\ref{Con:symmetry} and~\ref{Con:popularity},
Problem~\ref{Prob:original} can be  converted into:}
\begin{Prob}[Simplified  Problem  for Arbitrary File Popularity] \label{Prob:simplify_2}
\begin{align}
\widetilde{R}^*_{\rm avg}(K,N,M,\mathbf{y}) \triangleq  \min_{\mathbf{y}} \quad &  \widetilde{R}_{\rm avg}(K,N,M,\mathbf{y})  \nonumber \\
s.t. \quad  &y_{n,s} \geq  y_{n+1,s}, \quad  \forall  n \in \{1,2,\ldots, N-1\},  s \in \{1,2, \cdots, K\} \label{eqn:additional_cons} \\
&0 \leq y_{n, s} \leq 1 , \quad  \forall  s \in \{0,1,\cdots, K\}, \ n \in \mathcal{N}, \label{eqn:X_range_2} \\
&\sum_{s=0}^{K} {K \choose s} y_{n, s}=1 , \quad  \forall n \in \mathcal{N}, \label{eqn:X_sum_2}\\
&\sum_{n=1}^{N}\sum_{s=1}^{K} {K-1 \choose s-1} y_{n, s} \leq M, \label{eqn:memory_constraint_2}
\end{align}
\end{Prob}
where
\begin{align}
\widetilde{R}_{\rm avg}(K,N,M,\mathbf{y})
\triangleq& \sum_{s=1}^{K}{K \choose s}\sum_{n=1}^{N}\left(\left(\sum_{n'=n}^{N}p_{n'}\right)^s-\left(\sum_{n'=n+1}^{N}p_{n'}\right)^s\right) y_{n,s-1}\nonumber \\
-&\sum_{u=1}^{\min \{K,N\}}\sum_{s=1}^{K-u}{K-u \choose s}\sum_{i=1}^{K-u}{K-u-i \choose s-1} \sum_{n=1}^{N}  P'_{i,u,n}  y_{n,s-1},\label{eqn:average_load_4}
\end{align}
and $P'_{i,u,n}$ is given in \eqref{eqn:region1}-\eqref{eqn:region4}  at the top of the next page.
\end{Lem}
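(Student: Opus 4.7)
The plan is to carry out the reduction in two parts: first translate Conditions~\ref{Con:symmetry} and~\ref{Con:popularity} into the simplified feasible set, and then rewrite the objective in \eqref{eqn:average_load_1} by converting each $\max$ into an indexed $\min$ and computing the resulting expectation via the ``balls into bins'' quantities introduced just before the lemma.

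For the constraint side, setting $x_{n,\mathcal{S}} = y_{n,|\mathcal{S}|}$ under Condition~\ref{Con:symmetry} turns \eqref{eqn:X_range} into \eqref{eqn:X_range_2} verbatim. In \eqref{eqn:X_sum} the inner sum over $\{\mathcal{S}\subseteq \mathcal{K}:|\mathcal{S}|=s\}$ collapses because every term equals the common value $y_{n,s}$, and there are $\binom{K}{s}$ such subsets, giving \eqref{eqn:X_sum_2}. For \eqref{eqn:memory_constraint}, for each fixed $k$ and $s$ the number of $s$-subsets of $\mathcal{K}$ containing $k$ is $\binom{K-1}{s-1}$, which yields \eqref{eqn:memory_constraint_2}. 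Condition~\ref{Con:popularity} is restated literally as \eqref{eqn:additional_cons}.

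The objective is the delicate part. Under Condition~\ref{Con:symmetry} one has $\max_{k\in\mathcal{S}} x_{d_k,\mathcal{S}\setminus\{k\}} = \max_{k\in\mathcal{S}} y_{d_k,|\mathcal{S}|-1}$, and since the popularity ordering $p_1\ge\cdots\ge p_N$ combined with Condition~\ref{Con:popularity} makes $y_{n,s}$ non-increasing in $n$, this max equals $y_{\min_{k\in\mathcal{S}} d_k,\,|\mathcal{S}|-1}$. I would then split the outer sum in \eqref{eqn:average_load_1} into the unrestricted sum over nonempty $\mathcal{S}\subseteq\mathcal{K}$ minus the sum restricted to $\mathcal{S}\subseteq\mathcal{K}\setminus\underline{\mathcal{K}}(\mathbf D)$. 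For the unrestricted part, fix $|\mathcal{S}|=s$; independence of the $D_k$'s gives $\Pr[\min_{k\in\mathcal{S}} D_k = n] = \bigl(\sum_{n'\ge n}p_{n'}\bigr)^s - \bigl(\sum_{n'>n}p_{n'}\bigr)^s$, and multiplying by the $\binom{K}{s}$ subsets of that size reproduces the first line of \eqref{eqn:average_load_4}.

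For the correction term I condition on $|\underline{\mathcal{K}}(\mathbf D)|=u$, at which point the $K-u$ non-representative users have sorted requests $\widetilde{D}_{u,\langle 1\rangle}\le\cdots\le\widetilde{D}_{u,\langle K-u\rangle}$ and any $s$-subset of those users has $\min D_k$ equal to one of these order statistics; a standard count shows that $\binom{K-u-i}{s-1}$ of the $s$-subsets pick out the $i$-th order statistic as their minimum, which ties the resulting expectation to $P'_{i,u,n}$ and, after summing over $u$, $s$, $i$, $n$, delivers the second line of \eqref{eqn:average_load_4}. The main obstacle I anticipate is computing $P'_{i,u,n}$ itself: viewing $\mathbf{D}$ as a placement of $K$ labelled balls into $N$ bins with bin $n'$ chosen with probability $p_{n'}$, one must enumerate which $u-1$ of the bins other than $n$ are occupied (split into those with index below $n$ and above $n$, parameterized by $\mathcal{L}_{A_n,1}$ and $\mathcal{L}_{A_n,2}$), how the remaining balls distribute among them (parameterized by $\mathcal{A}_{B_{n,1},\mathcal{L}}$ and $\mathcal{A}_{B_{n,2},\mathcal{L}}$), and weight by the multinomial $\binom{K}{m_1,m_2,K-m_1-m_2}$; the case split \eqref{eqn:region1}--\eqref{eqn:region4} is there to handle whether $n$ lies at the low, middle, or high end of the occupied-bin indices. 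Once $P'_{i,u,n}$ is assembled in each region, linearity of expectation closes the argument.
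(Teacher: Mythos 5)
Your proposal is correct and follows essentially the same route as the paper's Appendix A: convert the constraints by counting the subsets of each size, split the objective into the unrestricted sum minus the sum over subsets avoiding the representative users, use Condition~\ref{Con:popularity} to replace each max by the value at the most popular requested index, count the $s$-subsets whose minimum is the $i$-th order statistic via ${K-u-i \choose s-1}$, and assemble $P'_{i,u,n}$ through the balls-into-bins enumeration with the four-region case split. Incidentally, your count agrees with the appendix's own derivation, which does not carry the extra factor ${K-u \choose s}$ that appears in the displayed formula \eqref{eqn:average_load_4}.
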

\begin{proof}
Please refer to Appendix A.
\end{proof}

\begin{figure*}[!t]
\small{
\begin{align}
P'_{i,u,n} &=\sum_{a \in \{1,\ldots, u-2\}}\sum_{b_1 \in \{a, \ldots, i+a-1\}}\sum_{b_2 \in \{u-a-1, \ldots, K-i-a-1\}}
{K \choose b_1,K-b_1-b_2, b_2} P_n^{K-b_1-b_2}  \nonumber \\
& \times \sum_{\mathcal L_1 \in \mathcal L_{a,1}}\sum_{(\alpha_{n'})_{n' \in \mathcal L_1} \in \mathcal A_{b_1, \mathcal L_1}} \frac{b_1!}{\prod_{n' \in \mathcal L_1}\alpha_{n'}!}\prod_{n' \in \mathcal L_1}P_{n'}^{\alpha_{n'}} \sum_{\mathcal L_2 \in \mathcal L_{a,2}}\sum_{(\alpha_{n'})_{n' \in \mathcal L_2} \in  \mathcal A_{b_2, \mathcal L_2}}  \frac{b_2!}{\prod_{n' \in \mathcal L_2}\alpha_{n'}!}\prod_{n' \in \mathcal L_2}P_{n'}^{\alpha_{n'}}\nonumber \\
& +\sum_{b_2 \in \{u-1, \ldots, K-2\}} {K \choose  b_2} P_n^{K-b_2} \sum_{\mathcal L_2 \in \mathcal L_{0,2}}\sum_{(\alpha_{n'})_{n' \in \mathcal L_2} \in  \mathcal A_{b_2, \mathcal L_2}} \frac{b_2!}{\prod_{n' \in \mathcal L_2}\alpha_{n'}!}\prod_{n' \in \mathcal L_2}P_{n'}^{\alpha_{n'}}\nonumber \\
& +\sum_{b_1 \in \{u-1, \ldots, K-2\}} {K \choose  b_1} P_n^{K-b_1} \sum_{\mathcal L_1 \in \mathcal L_{u-1,1}}\sum_{(\alpha_{n'})_{n' \in \mathcal L_1} \in  \mathcal A_{b_1, \mathcal L_1}}  \frac{b_1!}{\prod_{n' \in \mathcal L_1}\alpha_{n'}!}\prod_{n' \in \mathcal L_1}P_{n'}^{\alpha_{n'}}, u \leq n,  u+n \leq N+1 \label{eqn:region1}
\end{align}}
%\hrulefill
%\end{figure*}
%\begin{figure*}[!t]
\small{
\begin{align}
P'_{i,u,n}&=\sum_{a \in \{u-1+n-N,\ldots, u-2\}}\sum_{b_1 \in \{a, \ldots, i+a-1\}}\sum_{b_2 \in \{u-a-1, \ldots, K-i-a-1\}} {K \choose b_1,K-b_1-b_2, b_2} P_n^{K-b_1-b_2}  \nonumber \\
 & \times \sum_{\mathcal L_1 \in \mathcal L_{a,1}}\sum_{(\alpha_{n'})_{n' \in \mathcal L_1} \in \mathcal A_{b_1, \mathcal L_1}} \frac{b_1!}{\prod_{n' \in \mathcal L_1}\alpha_{n'}!}\prod_{n' \in \mathcal L_1}P_{n'}^{\alpha_{n'}} \sum_{\mathcal L_2 \in \mathcal L_{a,2}}\sum_{(\alpha_{n'})_{n' \in \mathcal L_2} \in  \mathcal A_{b_2, \mathcal L_2}} \frac{b_2!}{\prod_{n' \in \mathcal L_2}\alpha_{n'}!}\prod_{n' \in \mathcal L_2}P_{n'}^{\alpha_{n'}}\nonumber \\
& +\sum_{b_1 \in \{u-1, \ldots, K-2\}} {K \choose  b_1} P_n^{K-b_1} \sum_{\mathcal L_1 \in \mathcal L_{u-1,1}}\sum_{(\alpha_{n'})_{n' \in \mathcal L_1} \in  \mathcal A_{b_1, \mathcal L_1}}   \frac{b_1!}{\prod_{n' \in \mathcal L_1}\alpha_{n'}!}\prod_{n' \in \mathcal L_1}P_{n'}^{\alpha_{n'}}, u \leq n, u+n > N+1\label{eqn:region2}
\end{align}}
% \hrulefill
%\end{figure*}
%\begin{figure*}[!t]
\small{
\begin{align}
P'_{i,u,n}&=\sum_{a \in \{1,\ldots, n-1\}}\sum_{b_1 \in \{a, \ldots, i+a-1\}}\sum_{b_2 \in \{u-a-1, \ldots, K-i-a-1\}}  {K \choose b_1,K-b_1-b_2, b_2} P_n^{K-b_1-b_2} \nonumber \\
& \times \sum_{\mathcal L_1 \in \mathcal L_{a,1}}\sum_{(\alpha_{n'})_{n' \in \mathcal L_1} \in \mathcal A_{b_1, \mathcal L_1}}\frac{b_1!}{\prod_{n' \in \mathcal L_1}\alpha_{n'}!}\prod_{n' \in \mathcal L_1}P_{n'}^{\alpha_{n'}} \sum_{\mathcal L_2 \in \mathcal L_{a,2}}\sum_{(\alpha_{n'})_{n' \in \mathcal L_2} \in  \mathcal A_{b_2, \mathcal L_2}} \frac{b_2!}{\prod_{n' \in \mathcal L_2}\alpha_{n'}!}\prod_{n' \in \mathcal L_2}P_{n'}^{\alpha_{n'}}\nonumber \\
& +\sum_{b_2 \in \{u-1, \ldots, K-2\}} {K \choose  b_2} P_n^{K-b_2} \sum_{\mathcal L_2 \in \mathcal L_{0,2}}\sum_{(\alpha_{n'})_{n' \in \mathcal L_2} \in  \mathcal A_{b_2, \mathcal L_2}}  \frac{b_2!}{\prod_{n' \in \mathcal L_2}\alpha_{n'}!}\prod_{n' \in \mathcal L_2}P_{n'}^{\alpha_{n'}}, u > n, u+n \leq N+1\label{eqn:region3}
\end{align}}
%\hrulefill
%\end{figure*}
%\begin{figure*}[!t]
\small{
\begin{align}
P'_{i,u,n} &=\sum_{a \in \{u-1+n-N,\ldots, n-1\}}\sum_{b_1 \in \{a, \ldots, i+a-1\}}\sum_{b_2 \in \{u-a-1, \ldots, K-i-a-1\}} {K \choose b_1,K-b_1-b_2, b_2} P_n^{K-b_1-b_2} \nonumber \\
& \times \sum_{\mathcal L_1
 \in \mathcal L_{a,1}}\sum_{(\alpha_{n'})_{n' \in \mathcal L_1} \in \mathcal A_{b_1, \mathcal L_1}} \frac{b_1!}{\prod_{n' \in \mathcal L_1}\alpha_{n'}!}\prod_{n' \in \mathcal L_1}P_{n'}^{\alpha_{n'}}  \nonumber \\
& \times \sum_{\mathcal L_2 \in \mathcal L_{a,2}}\sum_{(\alpha_{n'})_{n' \in \mathcal L_2} \in  \mathcal A_{b_2, \mathcal L_2}} \frac{b_2!}{\prod_{n' \in \mathcal L_2}\alpha_{n'}!}\prod_{n' \in \mathcal L_2}P_{n'}^{\alpha_{n'}}, u > n, u+n > N+1\label{eqn:region4}
\end{align}}\hrulefill
% \hrulefill
\end{figure*}
%\begin{Rem}
%The  average load for an arbitrary file popularity and   general uncoded placement in \eqref{eqn:average_load_4} is  more general than  the average load   for the uniform file popularity and a specific uncoded placement in~\cite{YuQian}.
%In particular, by choosing $p_n =\frac{1}{N}$ for $n \in \mathcal N$ and
%\begin{align}
%y_{n,s}=
%\begin{cases}
%\frac{1}{{K \choose \frac{KM}{N}}}, & n \in \mathcal N, s=\frac{KM}{N}\\
%0, &\text{otherwise},
%\end{cases}\nonumber
%\end{align}
%the  average load in \eqref{eqn:average_load_4} reduces to the average load  in \cite{YuQian},  which is optimal for the uniform popularity case. This shows   the optimal solution for the uniform popularity is a feasible solution to our general simplified problem.
%\end{Rem}

Problem~\ref{Prob:simplify_2} is a linear  program  with $N(K+1)$ variables  and  can be solved  by using linear optimization techniques.

\subsection{Optimization  for Uniform  File   Popularity}\label{Sub:uniform}
In this part, we consider a special case, i.e., the uniform file popularity  ($p_n =\frac{1}{N}$,  for all  $n \in \mathcal N$).  First, we  present another   structural condition on the file partition parameter.

\begin{Con}[Symmetry w.r.t. File]\label{Con:popularity_2}
For all  $n \in \{1,2,\ldots, N-1\}$ and $s \in \{1,2, \cdots, K\}$,  when $p_n=p_{n+1}$,
\begin{align}
y_{n,s} = y_{n+1,s}. \label{eqn:symmetry_2}
\end{align}
\end{Con}

Condition~\ref{Con:popularity_2} indicates that for all   $n \in \{1,2,\ldots, N-1\}$ and   $s \in \{1,2, \cdots, K\}$,  when $p_n=p_{n+1}$,
the size of subfiles $W_{n,\mathcal{S}}$, $\mathcal{S}\subseteq \{\mathcal{\widehat{S}} \subseteq \mathcal{K}: |\mathcal{\widehat{S}}|=s\}$ is the same as that of subfiles
$W_{n+1,\mathcal{S}}$, $\mathcal{S}\subseteq \{\mathcal{\widehat{S}} \subseteq \mathcal{K}: |\mathcal{\widehat{S}}|=s\}$.
Condition~\ref{Con:popularity_2}  ensures
zero  variance of the lengths of messages involved in the coded-multicast XOR operations,  hence  avoiding ``bit waste" effect and  further  increasing coded-multicasting opportunities  for the uniform file popularity.
Later, we shall show that imposing this condition will not lose optimality of Problem~\ref{Prob:simplify_2}  under the uniform file popularity.

By Condition~\ref{Con:popularity_2}, we can set
\begin{align}
y_{n,s}=z_s, \quad  \forall  s \in \{0,1,\cdots, K\}, \ n \in \mathcal{N}. \label{eqn:symmetry_2}
\end{align}
Here,   $z_s$  can be viewed as  the size of each subfile of type $s$, normalized by the file size $F$.
Let $\mathbf{z} \triangleq (z_{s})_{ s \in \{0,1,\cdots, K\}}$.
%By \eqref{eqn:symmetry_2}, the  constraints in  \eqref{eqn:X_range_2}, \eqref{eqn:X_sum_2} and  \eqref{eqn:memory_constraint_2}   can be converted into
%\begin{align}
%&0 \leq z_{s} \leq 1 , \quad s \in \{0,1,\cdots, K\}, \label{eqn:X_range_3}\\
%&\sum_{s=0}^{K} {K \choose s} z_{s}=1 , \label{eqn:X_sum_3}\\
%&\sum_{s=0}^{K} {K \choose s}s z_{s} \leq \frac{KM}{N}. \label{eqn:memory_constraint_3}
%\end{align}

Next, we simplify Problem~\ref{Prob:original} under Conditions~\ref{Con:symmetry} and  \ref{Con:popularity_2}.
\textcolor{black}{First, we introduce some notations.}
Let $P''_{u} \triangleq \Pr \left[|\underline{\mathcal{K}}(\mathbf D)|=u\right]$,   for all  $u \in \left\{1,2, \ldots, \min\{K,N\}\right\}$.
Note that event $|\underline{\mathcal{K}}(\mathbf D)|=u$ corresponds to   the event  in the  ``balls into bins'' problem that there are $u$ nonempty bins after placing  $K$ balls uniformly at random into $N$ bins.
By using  results  for the   ``balls into bins'' problem  in the uniform case~\cite{flajolet2009analytic} and considering Conditions~\ref{Con:symmetry} and~\ref{Con:popularity_2},  we have the following result.
\begin{Lem}[Simplification \textcolor{black}{of Problem~\ref{Prob:original} for}  Uniform File Popularity] \label{Lem:uniform_obj_simp}
\textcolor{black}{Under   Conditions~\ref{Con:symmetry} and~\ref{Con:popularity_2},
Problem~\ref{Prob:original} can be  converted into:}
\begin{Prob}[Simplified  Problem   for Uniform File  Popularity]\label{Prob:equivalent_3}
\begin{align}
\widehat{R}^*_{\rm avg}(K,N,M)\triangleq \min_{\mathbf{z}} \quad &   \sum_{s=0}^{K-1}{K \choose s+1}z_{s}- \sum_{u=1}^{\min \{K,N\}} P''_{u}
\sum_{s=0}^{K-u-1}{K-u \choose s+1} z_{s}\nonumber \\
s.t. \quad  &0 \leq z_{s} \leq 1 , \quad s \in \{0,1,\cdots, K\}, \label{eqn:X_range_3}\\
&\sum_{s=0}^{K} {K \choose s} z_{s}=1 , \label{eqn:X_sum_3}\\
&\sum_{s=0}^{K} {K \choose s}s z_{s} \leq \frac{KM}{N}, \label{eqn:memory_constraint_3}
\end{align}
where    %$P''_{u}$ is given by
$
P''_{u}=\begin{Bmatrix}
K\\
u
\end{Bmatrix}
\frac{{N \choose u}u!}{N^K},$
and
$\begin{Bmatrix}
K\\
u
\end{Bmatrix}$
is the  Stirling number of the second kind.
\end{Prob}
\end{Lem}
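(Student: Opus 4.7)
The strategy is to substitute Conditions~\ref{Con:symmetry} and~\ref{Con:popularity_2} into both the objective in~\eqref{eqn:average_load_1} and the constraints~\eqref{eqn:X_range}--\eqref{eqn:memory_constraint}, reducing everything to expressions in the scalar variables $z_0, \ldots, z_K$, and then identify $P''_u$ with a standard balls-into-bins quantity.

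First, I would apply Condition~\ref{Con:symmetry} to rewrite $x_{d_k, \mathcal{S}\setminus\{k\}} = y_{d_k, s-1}$ whenever $|\mathcal{S}| = s$, and then apply Condition~\ref{Con:popularity_2} (valid in the uniform case) to further collapse this to $z_{s-1}$. The inner $\max$ in~\eqref{eqn:average_load_1} then degenerates to the constant $z_{s-1}$, eliminating its dependence on both $\mathbf d$ and the particular $\mathcal{S}$ of size $s$.

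Second, for each realization of $\mathbf D$, I would count the subsets $\mathcal{S}\subseteq \mathcal{K}$ of size $s$ that hit $\underline{\mathcal{K}}(\mathbf D)$. Letting $u = |\underline{\mathcal{K}}(\mathbf D)|$, this count is $\binom{K}{s} - \binom{K-u}{s}$ (all $s$-subsets minus those contained in the $(K-u)$-element complement). Grouping the outer sum by $u$, using $\sum_u P''_u = 1$, and re-indexing $s\to s+1$ then gives exactly the two-term objective of Problem~\ref{Prob:equivalent_3}. For $P''_u = \Pr[|\underline{\mathcal{K}}(\mathbf D)|=u]$, I would invoke the standard ``balls into bins'' identity from~\cite{flajolet2009analytic}: the number of ways to place $K$ labeled balls into exactly $u$ non-empty (labeled) bins is $\begin{Bmatrix}K\\u\end{Bmatrix} u!$, and there are $\binom{N}{u}$ choices of which $u$ of the $N$ bins are non-empty, yielding the stated formula after dividing by $N^K$.

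Third, I would translate the constraints. Substituting $x_{n,\mathcal{S}} = z_{|\mathcal{S}|}$ into~\eqref{eqn:X_sum} gives $\sum_{s=0}^{K}\binom{K}{s}z_s = 1$, i.e.~\eqref{eqn:X_sum_3}, since there are $\binom{K}{s}$ subsets of size $s$. For~\eqref{eqn:memory_constraint}, I would use that every fixed user $k$ lies in $\binom{K-1}{s-1}$ subsets of size $s$, which together with symmetry across $n$ reduces the constraint (identical for every $k$) to $N\sum_{s=1}^{K}\binom{K-1}{s-1}z_s \leq M$; this is equivalent to~\eqref{eqn:memory_constraint_3} via $s\binom{K}{s} = K\binom{K-1}{s-1}$. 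The steps are individually routine; the only delicate point is verifying the re-indexing so that the two sums in the objective appear in the exact form of Problem~\ref{Prob:equivalent_3}, and confirming that the $K$ symmetric cache constraints in~\eqref{eqn:memory_constraint} indeed collapse to a single scalar inequality. I do not anticipate a serious obstacle beyond this bookkeeping.
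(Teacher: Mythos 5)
Your proposal is correct and follows essentially the same route as the paper's Appendix B: substitute Conditions~\ref{Con:symmetry} and~\ref{Con:popularity_2} so the max degenerates to $z_{s-1}$, write the number of size-$s$ subsets meeting $\underline{\mathcal{K}}(\mathbf D)$ as $\binom{K}{s}-\binom{K-u}{s}$, condition on $u=|\underline{\mathcal{K}}(\mathbf D)|$ to obtain the two-term objective with $P''_u$ given by the standard Stirling-number balls-into-bins count, and reduce the constraints by counting subsets. No substantive difference from the paper's argument.
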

\begin{proof}
Please refer to Appendix B.
\end{proof}

Problem~\ref{Prob:equivalent_3} is a linear  program  with  $K+1$   variables and  can be solved more efficiently than Problem~\ref{Prob:original}.

Finally,  we discuss the relation between an optimal solution of Problem~\ref{Prob:equivalent_3} and  Yu {\em et al.}'s centralized coded caching  scheme~\cite{YuQian}.\footnote{Yu {\em et al.}'s centralized coded caching  scheme focuses on  cache size $M \in \{0, \frac{N}{K}, \frac{2N}{K}, \ldots, N\}$, so that $\frac{KM}{N}$ is an integer in  $\{0, 1, \ldots, K\}$.  For general $M \in [0,N]$, the worst-case load can be achieved by memory sharing.}
Using KKT conditions, we have
\begin{Lem}[Optimal Solution to Problem~\ref{Prob:equivalent_3}]\label{Lem:Ali}
For cache size $M \in \left\{0, \frac{N}{K}, \frac{2N}{K}, \ldots, N\right\}$, $\mathbf{z}^*\triangleq (z^*_{s})_{ s \in \{0,1,\cdots, K\}}$ is an   optimal solution  to  Problem~\ref{Prob:equivalent_3}, where
\begin{align}
z^*_{s}=
\begin{cases}
\frac{1}{{K \choose \frac{KM}{N}}}, & s=\frac{KM}{N}\\
0, &s \in \{0,1,\cdots, K\}\setminus \{\frac{KM}{N}\},
\end{cases}\label{eqn:uniform_caching}
\end{align}
and the  optimal value of Problem~\ref{Prob:equivalent_3} is given by\footnote{In this paper, we define ${n \choose k}=0$ when $k>n$~\cite{YuQian}.}
\begin{align}
\widehat{R}^*_{\rm avg}(K,N, M)&= \frac{K(1-M/N)}{1+KM/N}- \sum_{u=1}^{\min \{K,N\}}
P''_{u}{K-u \choose  \frac{KM}{N}+1}\Big /{K \choose \frac{K M}{N}}.  \label{eqn:Ali}
\end{align}
\end{Lem}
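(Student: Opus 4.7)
The plan is to verify the KKT conditions of Problem~\ref{Prob:equivalent_3} at the proposed point $\mathbf{z}^*$. Since Problem~\ref{Prob:equivalent_3} is a linear program with nonempty bounded feasible region, the KKT conditions are both necessary and sufficient for optimality. My first step is to check primal feasibility: setting $t \triangleq KM/N \in \{0,1,\ldots,K\}$, the file-size equality \eqref{eqn:X_sum_3} gives $\sum_s {K \choose s}z^*_s = {K \choose t} \cdot (1/{K \choose t}) = 1$, the memory constraint \eqref{eqn:memory_constraint_3} gives $\sum_s s{K \choose s}z^*_s = t = KM/N$ with equality, and $0 \leq z^*_s \leq 1$ is immediate from ${K \choose t} \geq 1$.

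Next, I would form the Lagrangian using multiplier $\lambda \in \mathbb{R}$ for \eqref{eqn:X_sum_3}, $\mu \geq 0$ for \eqref{eqn:memory_constraint_3}, and $\alpha_s, \beta_s \geq 0$ for $z_s \geq 0$ and $z_s \leq 1$. Let $c_s$ denote the coefficient of $z_s$ in the objective. The stationarity condition reads $c_s - (\lambda + \mu s){K \choose s} - \alpha_s + \beta_s = 0$ for every $s$. Complementary slackness gives $\alpha_t = 0$ (since $z^*_t > 0$) and $\beta_s = 0$ for all $s$ (since $z^*_s < 1$ whenever $t \notin \{0,K\}$; the corner cases $M=0$ and $M=N$ are handled by direct inspection). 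The stationarity equation at $s = t$ then pins down $\lambda + \mu t = c_t/{K \choose t}$, leaving one degree of freedom in the pair $(\lambda,\mu)$.

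The key remaining step is to choose $\mu \geq 0$ so that $\alpha_s = c_s - (\lambda + \mu s){K \choose s} \geq 0$ for every $s \neq t$. Dividing through by ${K \choose s}$, this amounts to requiring that the discrete function $g(s) \triangleq c_s/{K \choose s}$ admits an affine minorant of slope $\mu$ through the point $(t, g(t))$, i.e., $g(s) \geq g(t) + \mu(s - t)$ for all $s$. Such a $\mu \geq 0$ exists precisely when $(t, g(t))$ lies on the lower convex envelope of $g$. The main technical obstacle is establishing this discrete convexity: I would expand $g(s) = (K-s)/(s+1) - \sum_{u=1}^{\min\{K-s-1, N\}} P''_u {K-u \choose s+1}/{K \choose s}$ using the closed form ${K-u \choose s+1}/{K \choose s} = [(K-s)(K-s-1)\cdots(K-s-u)]/[(s+1)K(K-1)\cdots(K-u+1)]$, and show that the forward differences $g(s+1) - g(s)$ are non-decreasing in $s$ by comparing coefficients of each $P''_u$ term, which reduces to a monotonicity check on a product of rational functions of $s$.

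Finally, once the KKT conditions are established, the optimal value \eqref{eqn:Ali} follows by direct substitution of $\mathbf{z}^*$ into the objective: the first sum collapses to ${K \choose t+1}/{K \choose t} = (K-t)/(t+1) = K(1-M/N)/(1+KM/N)$, and the second sum collapses to $\sum_u P''_u {K-u \choose t+1}/{K \choose t}$, matching \eqref{eqn:Ali}. As a conceptual cross-check, I would also point out that Yu \emph{et al.}'s information-theoretic converse for the uniform case asserts that no coded caching scheme can beat the load in \eqref{eqn:Ali}; since their scheme corresponds exactly to $\mathbf{z}^*$ within our parameterization and this scheme is feasible for Problem~\ref{Prob:equivalent_3}, optimality over the strictly smaller feasible region of Problem~\ref{Prob:equivalent_3} follows immediately, bypassing the need for the discrete-convexity calculation above.
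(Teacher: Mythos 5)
Your main argument follows the same route as the paper's Appendix~C: verify the KKT conditions of the linear program, dispose of the corner cases $M=0$ and $M=N$ by direct inspection, and reduce the interior case to the existence of a supporting affine function for $g(s)=c_s/{K\choose s}$ at $s=t$. The convexity fact you leave as a sketch is exactly what the paper proves: using Pascal's identity it writes $c_s/{K\choose s}=\sum_u P''_u\sum_{k=K-u+1}^{K}g_k(s)$ with $g_k(s)={k-1\choose s}/{K\choose s}$, extends each $g_k$ to a continuous $h_k$, and checks $h''_k\ge 0$; your per-$P''_u$ forward-difference check is the same computation and does go through. Two remarks. First, a sign slip: with the memory constraint $\sum_s s{K\choose s}z_s\le KM/N$ and multiplier $\theta\ge 0$, stationarity yields the minorant condition $g(s)\ge g(t)-\theta(s-t)$, i.e., the supporting slope must be \emph{nonpositive} (consistent with $g$ convex and non-increasing); your condition $g(s)\ge g(t)+\mu(s-t)$ with $\mu\ge 0$ has the slope reversed and would fail for $s>t$. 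This is easily repaired but, as written, convexity of $g$ alone is not quite the full dual-feasibility statement--one also needs a nonpositive subgradient at $t$, which holds because $g$ is non-increasing. Second, your closing converse-based shortcut is a genuinely different and much shorter route that the paper only alludes to in the discussion after the lemma rather than using as the proof: combined with Lemma~2 (the objective equals the true average load of the scheme for every feasible $\mathbf z$) and Yu \emph{et al.}'s information-theoretic optimality under uniform popularity, it yields optimality of $\mathbf z^*$ with no multiplier computation at all, at the cost of importing an external converse instead of providing a self-contained LP certificate.
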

\begin{proof}
Please refer to Appendix C.
\end{proof}

Lemma~\ref{Lem:Ali} indicates that  Yu {\em et al.}'s centralized coded caching  scheme corresponds to an optimal solution of Problem~\ref{Prob:equivalent_3}. In addition, the optimal average load $\widehat{R}^*_{\rm avg}(K,N, M)$  in~\eqref{eqn:Ali}  is equivalent to    that in~\cite{YuQian}.
Specifically, the first term in~\eqref{eqn:Ali} corresponds to the  worst-case load in~\cite{AliFundamental} and the second term  in~\eqref{eqn:Ali}, which is given in explicit form as opposed to the implicit form containing an expectation w.r.t. the random  requests  given   in~\cite{YuQian},  indicates the load reduction due to the ability of the delivery strategy to take advantage of common requests.
Note that it has been shown that Yu {\em et al.}'s centralized coded caching  scheme
is optimal among all uncoded placement and all delivery  under the uniform file   popularity.
Thus, %we have the following corollary.
%\begin{Cor}
%Under the uniform file popularity, an optimal solution of Problem~\ref{Prob:original} satisfies Conditions 1, 2 and 3.
%\end{Cor}
%That is,
for the uniform file popularity, Conditions 1  and 3 are actually optimal properties.

\section{Average Load Minimization with Subpacketization Level Constraint} \label{Sec:L0}
In this section, we minimize the  average load  by optimizing  the file partition parameter  under   the subpacketization  level  constraint for each file, which  is  given by
\begin{align}
\| \mathbf{x}_n \|_{0} \leq \widehat{F}, \ n \in \mathcal N, \label{eqn:subpack}
\end{align}
where  $\| \mathbf{x}_n \|_{0}  \triangleq  \sum_{\mathcal S \in \mathcal K}\mathbf{1} \left[x_{n,\mathcal{S}}\neq0\right]  \in \{1,2, \ldots,  2^K\}$  denotes the $\ell_0$-norm of  the vector  $\mathbf{x}_n$, i.e., the total number of subfiles for  file $W_n$, and   $\widehat{F} \in \{1,2, \ldots,  2^K\}$ represents the  maximum admissible subpacketization level   for all files.
To the best of our knowledge, this is the first work  explicitly considering  the subpacketization  level  constraint in optimizing  coded caching design.

\subsection{Problem Formulation}\label{Sub:formulation_L0}
In this part,  we  minimize the average load under the file partition constraints in  \eqref{eqn:X_range} and  \eqref{eqn:X_sum}, the cache memory constraint in \eqref{eqn:memory_constraint}, and the  subpacketization  level  constraint in~\eqref{eqn:subpack}.
\begin{Prob}[Optimization  for Arbitrary File Popularity  with Subpacketization Constraint]\label{Prob:subpack1}
\begin{align}
R^{\dag}_{\rm avg}(K,N,M) \triangleq \min_{\mathbf{x}} \quad &  R_{\rm avg}(K,N,M,\mathbf{x})\nonumber \\
s.t. \quad  &\eqref{eqn:X_range}, \eqref{eqn:X_sum}, \eqref{eqn:memory_constraint}, \eqref{eqn:subpack},\nonumber
\end{align}
where $R_{\rm avg}(K,N,M,\mathbf{x})$ is given by \eqref{eqn:average_load_1}.
\end{Prob}

Compared with  Problem~\ref{Prob:original},  Problem~\ref{Prob:subpack1} has  an extra constraint, i.e., the  subpacketization  level  constraint in~\eqref{eqn:subpack}.
There are two main challenges in solving  Problem~\ref{Prob:subpack1}.
First, Problem~\ref{Prob:subpack1} is  an  NP-Hard   problem  due to the   combinatorial  constraint in~\eqref{eqn:subpack} involving  the $\ell_0$-norm~\cite{dc2017}.
Second,   as in Problem~\ref{Prob:original}, the number of variables in Problem~\ref{Prob:subpack1} is $N 2^K$,  which  is huge,  especially when $K$ and $N$ are large.

\subsection{Simplified Formulation}
There are  extensive  research  dealing with optimization problems involving the $\ell_0$-norm.
Those  works can be divided into three  main  categories according to the way  of treating   the $\ell_0$-norm,  i.e.,  convex approximation, non-convex approximation, and non-convex exact reformulation~\cite{le2015dc}.
For the  category  of convex approximation, one of the best known approaches is approximating  the  $\ell_0$-norm with the $\ell_1$-norm~\cite{le2015dc}.
If the original optimization problem is convex except the      constraint   involving the $\ell_0$-norm, this convex approximation  approach can  transform the original  NP-Hard  problem   into a convex  problem.
However,   it has been shown that  an optimal   solution of the  approximated convex   problem is not always sparse  (may not  be a feasible solution of the original problem)~\cite{shalev2010trading}.\footnote{Given the constraint in \eqref{eqn:X_sum_2}, the $\ell_1$-norm of $\mathbf{x}_n$  is  equal to a constant, i.e., $\| \mathbf{x}_n \|_{1}= \sum_{i=1}^{2^K}x_i=1$.
Thus, replacing  $\| \mathbf{x}_n \|_{0}$ with   $\| \mathbf{x}_n \|_{1}$ in \eqref{eqn:subpack} results in the constraint $1 \leq \widehat{F}$, which always holds and cannot restrain $\mathbf{x}_n$.}
For the category of non-convex approximation,  a variety of sparsity-inducing penalty functions, e.g., the $\ell_p$ pseudo-norm with $0<p<1$~\cite{lp}, exponential concave function~\cite{bradley1998feature}, and logarithmic function~\cite{Weston2003},  have been proposed to approximate the $\ell_0$-norm.
In general, non-convex approximation can provide better sparsity than convex  approximation, but may  still  not provide a feasible  solution of the original problem.
Few works focus on non-convex exact reformulation, which is proposed to guarantee the equivalence  between the reformulated problem and the original problem.
Using exact penalty techniques, \cite{le2015dc} and \cite{Advances2013}   show that   the  reformulated  problems  with suitable parameters  are equivalent to the original problems (share the same feasible solutions with  the original problems).
However, the  reformulated problems  are quite convoluted  as they rely on  several  parameters~\cite{dc2017}.
In  the  recent work~\cite{dc2017}, the authors propose an  exact DC    reformulation which is  simpler than the reformulated problems proposed in \cite{le2015dc} and \cite{Advances2013}, and then  obtain a stationary point of the DC problem using a DC algorithm.
In the following,    we use the  exact DC reformulation method  in~\cite{dc2017} in order to obtain a simple equivalent formulation of the original problem.

We  first  simplify Problem~\ref{Prob:subpack1} to   facilitate a  low-complexity solution.
Let  $a_{[i]}$ denote the  element whose  value is the $i$-th largest among the $m$ elements of  the vector  $\mathbf{a}$, i.e., $a_{[1]} \geq a_{[2]} \geq \ldots \geq a_{[m]} $.
Let $\| \mathbf{a} \|_{lgst, \widehat{F}}$ denote the largest-$\widehat{F}$ norm  of  the vector  $\mathbf{a}$, i.e.,
$\| \mathbf{a} \|_{lgst, \widehat{F}} \triangleq |a_{[1]}|+|a_{[2]}|+ \ldots +|a_{[\widehat{F}]}|$~\cite{cvx}.
Using the method for obtaining Problem~\ref{Prob:simplify_2} and  Theorem~1 of~\cite{dc2017} for simplifying the constraint in~\eqref{eqn:subpack} under Conditions~\ref{Con:symmetry} and~\ref{Con:popularity}, we have the following result.

\begin{Lem}[Simplification for  \textcolor{black}{Problem~\ref{Prob:subpack1} for Arbitrary File Popularity}]\label{Lem:SimpSubpackCons}
Under   Conditions~\ref{Con:symmetry} and~\ref{Con:popularity}, Problem~\ref{Prob:subpack1} can be  converted into:
\begin{Prob}[Simplified Problem   for Arbitrary File Popularity] \label{Prob:subpack2}
\begin{align}
\widetilde{R}^\dag_{\rm avg}(K,N,M,\mathbf{y}) \triangleq  \min_{\mathbf{y}} \quad &  \widetilde{R}_{\rm avg}(K,N,M,\mathbf{y})\nonumber \\
s.t. \quad  &\eqref{eqn:additional_cons}, \eqref{eqn:X_range_2}, \eqref{eqn:X_sum_2}, \eqref{eqn:memory_constraint_2}, \nonumber \\
& \|U_n \mathbf{y} \|_{lgst, \widehat{F}} \geq 1, \ n \in \mathcal N,  \label{eqn:subpack_eq_y}
\end{align}
where $\widetilde{R}_{\rm avg}(K,N,M,\mathbf{y})$ is given by \eqref{eqn:average_load_4},
\begin{align}
U_n \triangleq
[\mathbf{0},   \cdots, \mathbf{0}, \underbrace{J}_{n\text{-th  block}}, \mathbf{0}, \cdots, \mathbf{0}]
 \label{eqn:W}
\end{align}
denotes the   block matrix of the   dimension $2^K\times (K+1)N$, with   $2^K\times(K+1)$   matrix   $J \triangleq (j_{m,l})_{m \in \{1, \ldots, 2^K\}, l \in \{1, \ldots, K+1\}}$ as its  $n$-th block  and  $2^K\times(K+1)$   zero matrices as other blocks,  and the element of row $m$  and column $l$  of  $J$ is
\begin{align}\label{eqn:J}
j_{m,l} =\begin{cases}
1, & \sum_{i=1}^{l-1} {K \choose i-1}<m \leq \sum_{i=1}^{l} {K \choose i-1}\\
0, &\text{otherwise}
\end{cases}.
\end{align}
\end{Prob}
\end{Lem}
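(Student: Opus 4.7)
The plan is to combine the simplification machinery already used in Lemma~\ref{Lem:simplification pop} with the exact DC reformulation of Theorem~1 in~\cite{dc2017} to handle the new constraint~\eqref{eqn:subpack}. Since Problem~\ref{Prob:subpack1} differs from Problem~\ref{Prob:original} only through~\eqref{eqn:subpack}, and since Conditions~\ref{Con:symmetry} and~\ref{Con:popularity} are imposed in both cases, the conversion of~\eqref{eqn:average_load_1},~\eqref{eqn:X_range},~\eqref{eqn:X_sum} and~\eqref{eqn:memory_constraint} into~\eqref{eqn:average_load_4},~\eqref{eqn:additional_cons},~\eqref{eqn:X_range_2},~\eqref{eqn:X_sum_2} and~\eqref{eqn:memory_constraint_2} would follow verbatim from Lemma~\ref{Lem:simplification pop}. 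All that remains is to rewrite the subpacketization constraint~\eqref{eqn:subpack} in terms of the reduced parameter $\mathbf{y}$.

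First I would translate the $\ell_0$-norm of $\mathbf{x}_n$ under Condition~\ref{Con:symmetry}. Because $x_{n,\mathcal{S}}=y_{n,|\mathcal{S}|}$ and there are ${K \choose s}$ subsets $\mathcal{S}$ of size $s$, we have $\|\mathbf{x}_n\|_0=\sum_{s=0}^{K}{K \choose s}\mathbf{1}[y_{n,s}\neq 0]$. By inspection of~\eqref{eqn:J}, the $l$-th column of $J$ has exactly ${K \choose l-1}$ consecutive unit entries, so the ``lifted'' vector $J\mathbf{y}_n\in\mathbb{R}^{2^K}$ replicates each coordinate $y_{n,s}$ precisely ${K \choose s}$ times, and thus $\|J\mathbf{y}_n\|_0$ equals $\|\mathbf{x}_n\|_0$. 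Since $U_n$ in~\eqref{eqn:W} simply selects the $n$-th $(K+1)$-dimensional block of $\mathbf{y}$, we have $U_n\mathbf{y}=J\mathbf{y}_n$, reducing~\eqref{eqn:subpack} to $\|U_n\mathbf{y}\|_0\leq\widehat{F}$.

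Next I would apply the exact DC reformulation. By~\eqref{eqn:X_range_2} and~\eqref{eqn:X_sum_2}, $J\mathbf{y}_n$ is non-negative with entries summing to $\sum_{s=0}^{K}{K \choose s}y_{n,s}=1$. Hence $\|J\mathbf{y}_n\|_{lgst,\widehat{F}}\leq\|J\mathbf{y}_n\|_1=1$, with equality exactly when at most $\widehat{F}$ entries of $J\mathbf{y}_n$ are non-zero, since any non-negative entries that are excluded from the top-$\widehat{F}$ selection must then vanish. Therefore $\|U_n\mathbf{y}\|_0\leq\widehat{F}$ is equivalent to $\|U_n\mathbf{y}\|_{lgst,\widehat{F}}\geq 1$, which is exactly~\eqref{eqn:subpack_eq_y}. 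This is the specialization of Theorem~1 in~\cite{dc2017} to our setting.

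The main subtle point is that the DC equivalence critically relies on the normalization $\|J\mathbf{y}_n\|_1=1$, which is supplied here by the file partition constraint~\eqref{eqn:X_sum_2}; absent this normalization, the equivalence between the $\ell_0$ bound and the largest-$\widehat{F}$ lower bound does not hold and would need an additional penalty parameter as in the reformulations of~\cite{le2015dc} and~\cite{Advances2013}. A secondary point worth noting is that $J\mathbf{y}_n$ has a very structured form with entries occurring in equal-value blocks of size ${K \choose s}$, but this structure plays no role in the DC step, which depends only on non-negativity and the value of the $\ell_1$-norm. Combining Lemma~\ref{Lem:simplification pop} with the above translation of~\eqref{eqn:subpack} then yields Problem~\ref{Prob:subpack2} in the stated form.
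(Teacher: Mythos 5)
Your proposal is correct and follows essentially the same route as the paper's Appendix D: reuse the Lemma~\ref{Lem:simplification pop} machinery for the objective and the constraints \eqref{eqn:additional_cons}--\eqref{eqn:memory_constraint_2}, then apply Theorem~1 of~\cite{dc2017} together with the normalization $\|\mathbf{x}_n\|_1=1$ from \eqref{eqn:X_sum} and the identity $\mathbf{x}_n=U_n\mathbf{y}$ to turn \eqref{eqn:subpack} into \eqref{eqn:subpack_eq_y}. The only cosmetic differences are that you substitute $\mathbf{x}_n=J\mathbf{y}_n$ before invoking the DC equivalence rather than after, and that you spell out the equality case of $\|\cdot\|_{lgst,\widehat{F}}\leq\|\cdot\|_1$ for non-negative vectors where the paper simply cites the theorem.
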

\begin{proof}
Please refer to Appendix D.
\end{proof}

The number of variables in Problem~\ref{Prob:subpack2} is $N(K+1)$, which is much smaller than that of Problem~\ref{Prob:subpack1},  i.e., $N2^K$.
In addition, compared with Problem~\ref{Prob:simplify_2},  Problem~\ref{Prob:subpack2} has  an extra constraint in~\eqref{eqn:subpack_eq_y}, which has two advantages over the  subpacketization  level  constraint in~\eqref{eqn:subpack}:
(i)  the constraint in~\eqref{eqn:subpack_eq_y}  is a   DC constraint,  making Problem~\ref{Prob:simplify_2} a DC   problem, which can be solved by  a   DC algorithm in polynomial time;
(ii)   a  subgradient of  $\| U_n \mathbf{y} \|_{lgst, \widehat{F}}$  can be efficiently computed,  making  the DC algorithm  an efficient one.

Thus, in the following, we solve Problem~\ref{Prob:subpack2} by using  a  DC algorithm.
The main idea of the DC algorithm is to  iteratively solve a sequence of convex problems,  each of which is obtained by  linearizing  the second term of the objective function of the DC problem.
A subgradient of the second  term of the objective function is required in the linearization in each iteration.
Thus, to solve  Problem~\ref{Prob:subpack2}, we first  obtain  a subgradient of $\|U_n \mathbf{y} \|_{lgst, \widehat{F}}$ by extending   the  closed-form expression of  a   subgradient of  $\|\mathbf{y} \|_{lgst, \widehat{F}}$   given in~\cite{dc2017}.
\begin{Lem}[Subgradient of  $\|U_n\mathbf{y} \|_{lgst, \widehat{F}}$]\label{Lem:subgra}
$\mathbf{g}_n(\mathbf y) \triangleq (g_{n,i}(\mathbf y))_{i \in \{1,2, \ldots, N(K+1)\}}$  is a subgradient of  $\|U_n \mathbf{y} \|_{lgst, \widehat{F}}$, where
\begin{align}
&g_{n,(m-1)(K+1)+[i]}(\mathbf y)=\begin{cases}
{K \choose [i]-1}, & m=n, \ i \in \{1, \ldots, I-1\} \\
\widehat{F}- \sum_{i=1}^{I-1} {K \choose [i]-1}, & m=n, \  i=I \\
0, &\text{otherwise}
\end{cases},\label{eqn:g_t}
\end{align}
$[i]$ represents the index of $i$-th largest element in  $\mathbf{y}_n$,   and $I$ satisfies  $\sum_{i=1}^{I-1} {K \choose  [i]-1  }   \leq \widehat{F}$ and $\sum_{i=1}^{I} {K \choose  [i]-1  }   > \widehat{F}$.
\end{Lem}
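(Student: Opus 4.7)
The plan is to derive the subgradient by applying the chain rule for convex functions composed with the linear map $U_n$, extending the closed-form subgradient of the largest-$\widehat{F}$ norm from~\cite{dc2017}. Since $\|v\|_{lgst,\widehat{F}} = \max_{\mathcal{I}\subseteq\{1,\ldots,2^K\},\,|\mathcal{I}|=\widehat{F}} \sum_{i \in \mathcal{I}}|v_i|$ is a maximum of linear functions, it is convex and finite-valued on $\mathbb{R}^{2^K}$. Because $U_n$ is a linear map, the classical subdifferential chain rule gives $U_n^T \mathbf{h} \in \partial(\|U_n\mathbf{y}\|_{lgst,\widehat{F}})$ for every $\mathbf{h}$ in the subdifferential of $\|\cdot\|_{lgst,\widehat{F}}$ at $v=U_n\mathbf{y}$. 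Thus, it suffices to exhibit one such $\mathbf{h}$ and show that $U_n^T\mathbf{h}$ coincides with~\eqref{eqn:g_t}.

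Next, I would exploit the block structure of $v = U_n \mathbf{y}$ implied by~\eqref{eqn:W} and~\eqref{eqn:J}: the vector $v$ is partitioned into $K+1$ contiguous blocks, with the $l$-th block containing $\binom{K}{l-1}$ identical copies of $y_{n,l-1}$, while $U_n$ acts as zero on all coordinates outside the $n$-th sub-vector of $\mathbf{y}$. Evaluating $\|v\|_{lgst,\widehat{F}}$ therefore reduces to picking the $\widehat{F}$ largest entries of $v$, which is a greedy selection across blocks: sort the values $y_{n,s}$ in decreasing order via the index map $[i]$, include all entries of blocks $[1],\ldots,[I-1]$ in the maximizing set $\mathcal{I}^\ast$, and include exactly $\widehat{F} - \sum_{i=1}^{I-1}\binom{K}{[i]-1}$ entries of block $[I]$, where $I$ is determined by the two inequalities in the lemma statement. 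Since the $y_{n,s}$ are non-negative (by the feasibility of $\mathbf{y}$), the (unsigned) indicator vector of $\mathcal{I}^\ast$ is a valid subgradient $\mathbf{h}$ at $v$; this is the direct analog of the scalar-case formula of~\cite{dc2017} applied to the block-structured $v$.

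Finally, I would compute $U_n^T \mathbf{h}$ and match it with~\eqref{eqn:g_t}. Because $U_n^T$ has zero rows outside its $n$-th $(K+1)\times 2^K$ diagonal block $J^T$, and because the $l$-th row of $J^T$ simply sums the $\binom{K}{l-1}$ entries of $\mathbf{h}$ that lie in block $l$, the product $U_n^T \mathbf{h}$ is supported on the coordinates $(n-1)(K+1)+1,\ldots,(n-1)(K+1)+K+1$, with value at $(n-1)(K+1)+[i]$ equal to the number of positions in block $[i]$ that belong to $\mathcal{I}^\ast$. This count is $\binom{K}{[i]-1}$ for $i<I$, the residual $\widehat{F}-\sum_{i=1}^{I-1}\binom{K}{[i]-1}$ for $i=I$, and zero otherwise, which is exactly~\eqref{eqn:g_t}. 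The main subtlety will be handling ties among the $y_{n,s}$, where the ordering $[i]$ and the choice of which positions of block $[I]$ to include are not unique; this is benign, because each admissible choice attains the same maximum of $\sum_{i\in\mathcal{I}}|v_i|$ and, after summation over each block by $U_n^T$, produces the same coordinate values, so $\mathbf{g}_n(\mathbf{y})$ remains a well-defined subgradient.
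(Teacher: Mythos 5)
Your proposal is correct and follows essentially the same route as the paper's Appendix E: identify the top-$\widehat{F}$ entries of the block-structured vector $U_n\mathbf{y}$ (all $\binom{K}{[i]-1}$ copies of $y_{n,[i]-1}$ for $i<I$ plus a residual number from block $[I]$), take the corresponding indicator vector as a subgradient of the largest-$\widehat{F}$ norm, and push it back through $U_n^T$, which sums the indicator over each block. The only difference is that you justify the chain-rule step and the tie-breaking explicitly, whereas the paper delegates the composite subgradient formula $U_n^T\mathbf{f}^{\widehat{F}}(U_n\mathbf{y})$ to~\cite{dc2017}; this is a matter of detail, not of approach.
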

\begin{proof}
Please refer to Appendix E.
\end{proof}

Then, based on  the subgradient   $\mathbf{g}_n(\mathbf y)$  given in   Lemma~\ref{Lem:subgra}, we can obtain a stationary point of Problem~\ref{Prob:subpack2} using the DC algorithm as   summarized   in Alg.~\ref{alg:concave}~\cite{Boyd2016}.
As in~\cite{le2015dc},  to  approach   a globally optimal solution of Problem~\ref{Prob:subpack2},  we   obtain multiple   stationary points of Problem~\ref{Prob:subpack2} by  performing the DC algorithm  multiple times,  each with  a  random initial feasible point  of Problem~\ref{Prob:subpack2},    and adopt the   stationary point with the lowest average load among all the obtained   stationary points  of Problem~\ref{Prob:subpack2}.

\begin{algorithm}[h]
\caption{DC  Algorithm for Solving Problem~\ref{Prob:subpack2}}
\begin{algorithmic}[1]
\small{\STATE Find an initial feasible point $\mathbf y^{(0)}$   of Problem~\ref{Prob:subpack2} and set  $t=0$
\STATE  \textbf{repeat}
\STATE Set   $\mathbf y^{(t+1)}$ to  be  an optimal solution of the convex problem:
\begin{align}
 \min_{\mathbf{y}} \quad &  \widetilde{R}_{\rm avg}(K,N,M,\mathbf{y})\nonumber \\
s.t. \quad  &\eqref{eqn:additional_cons}, \eqref{eqn:X_range_2}, \eqref{eqn:X_sum_2}, \eqref{eqn:memory_constraint_2}  \nonumber\\
&  \|U_n\mathbf{y}^{(t)} \|_{lgst, \widehat{F}}+\mathbf{g}_n(\mathbf y^{(t)})^T(\mathbf y -\mathbf y^{(t)}) \geq 1, \ n \in \mathcal N
\end{align}
\STATE   Set  $t=t+1$
\STATE  \textbf{until}   $\widetilde{R}_{\rm avg}(K,N,M,\mathbf{y}^{(t-1)})- \widetilde{R}_{\rm avg}(K,N,M,\mathbf{y}^{(t)})  \leq \delta$}
\end{algorithmic}\label{alg:concave}
\end{algorithm}

\section{Numerical Results}\label{Sec:Numerical}
\begin{figure}
\begin{center}
  \subfigure[\small{Case without considering the subpacketization  level  issue.}]
  {\resizebox{5.2cm}{!}{\includegraphics{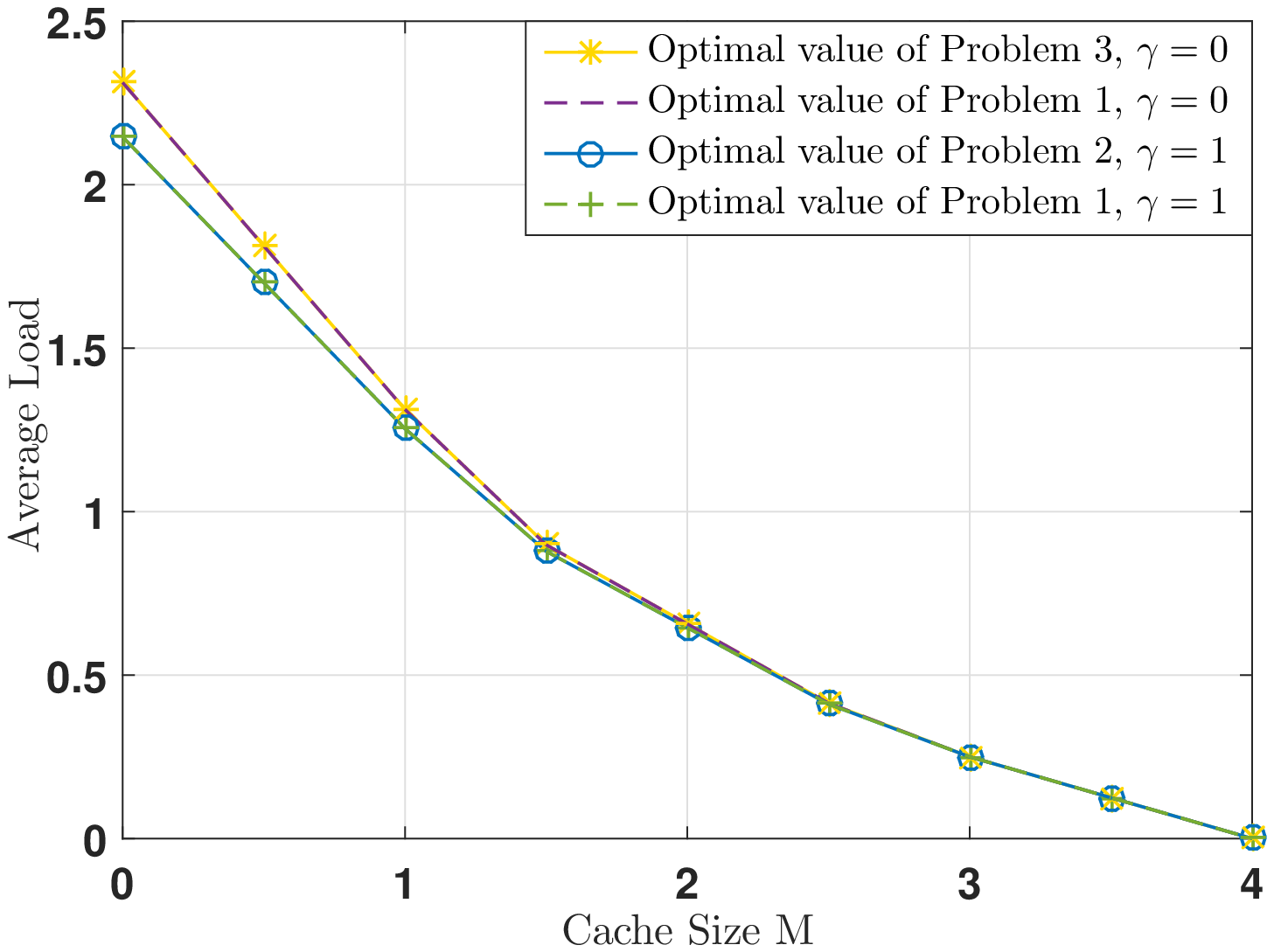}}}
      \subfigure[\small{Case  considering the subpacketization  level  issue at $\widehat{F}=5$.}]
  {\resizebox{5.2cm}{!}{\includegraphics{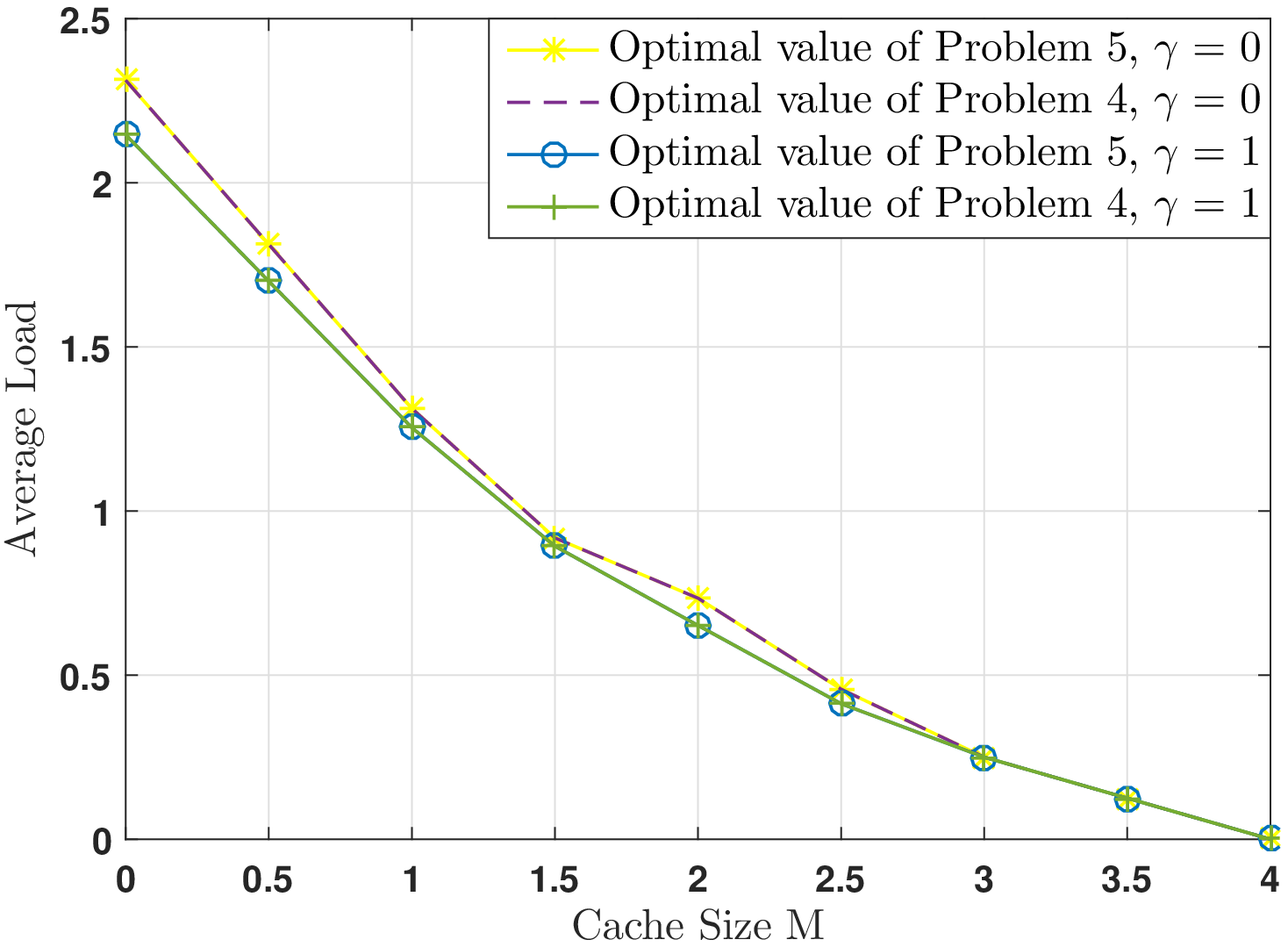}}}
  \subfigure[\small{Case  considering the subpacketization  level  issue at $M=2$.}]
  {\resizebox{5.2cm}{!}{\includegraphics{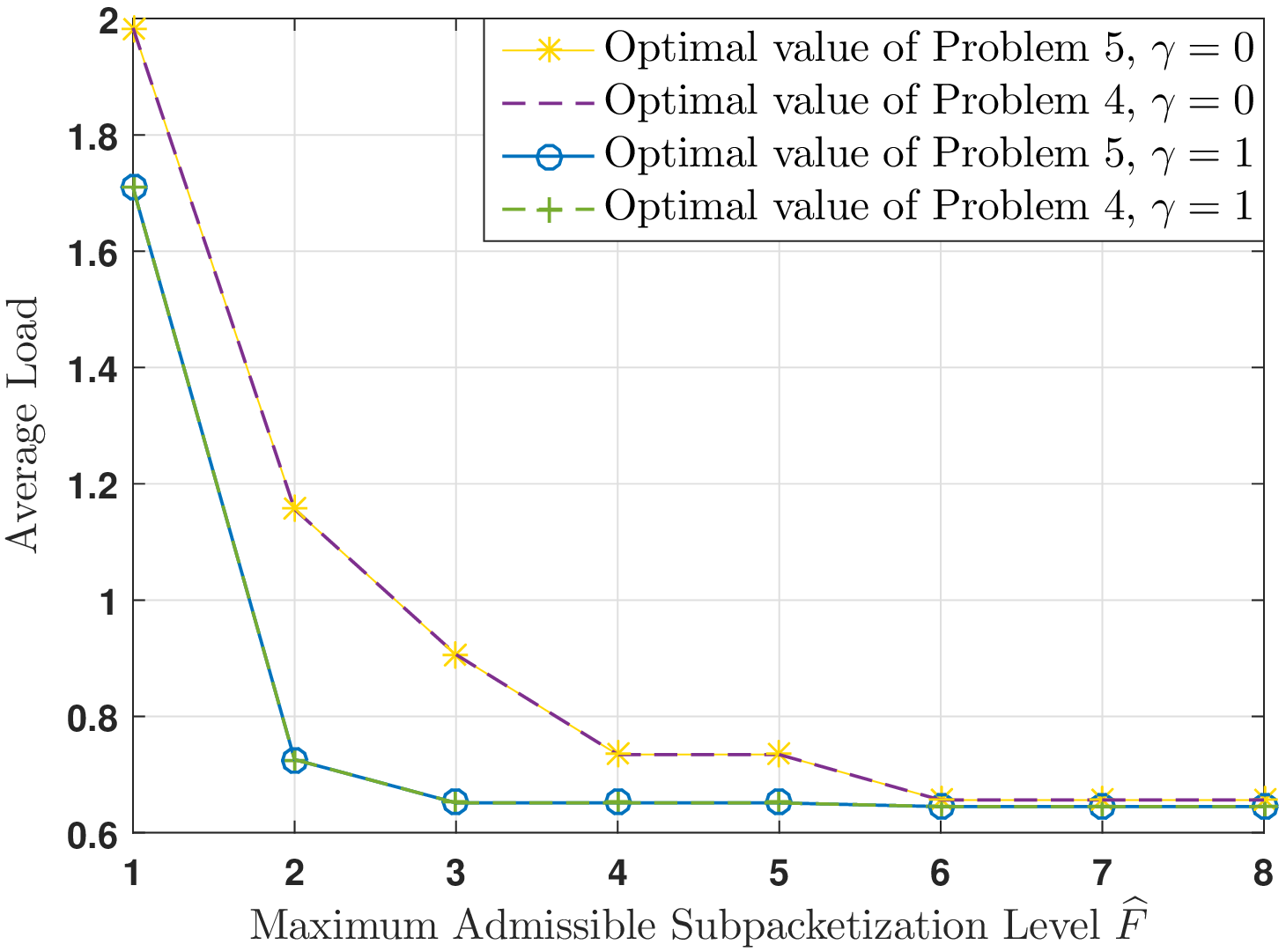}}}
  \end{center}
         \caption{Verification of Conditions 1, 2 and 3 in both cases at  $K=3$ and $N=4$.
         }\label{fig:verify}
\end{figure}

In the simulation, we assume  that  the file popularity follows Zipf distribution, i.e., $p_n=\frac{n^{-\gamma}}{\sum_{n \in \mathcal N} n^{-\gamma}}$ for all $n\in \mathcal N$, where $\gamma$ is the Zipf exponent.
Fig.~\ref{fig:verify}~(a) shows the optimal values  of  Problems~\ref{Prob:original}, \ref{Prob:simplify_2}  and \ref{Prob:equivalent_3},  verifying  that Conditions 1, 2 and 3 are  optimal conditions  in  the case without considering
the subpacketization  level  issue.
Fig.~\ref{fig:verify}~(b) and Fig.~\ref{fig:verify}~(c) show  the optimal values  of  Problems~\ref{Prob:subpack1} and~\ref{Prob:subpack2}, verifying  that Conditions 1   and 2 are  optimal conditions in  the case  considering the subpacketization  level  issue.

Fig.~\ref{fig:compare}  compares the average load of our optimized parameter-based scheme, the average loads of  Maddah-Ali--Niesen's centralized scheme~\cite{AliFundamental}, Jin {\em et al.}'s centralized scheme~\cite{jin},  Yu {\em et al.}'s centralized scheme~\cite{YuQian},   the genie-aided converse bound in~\cite{jin} and the conserve bound in~\cite{bound16},   all without considering  the subpacketization  level  issue.
From Fig.~\ref{fig:compare}, we can see that the optimized parameter-based scheme   outperforms  the three baseline  schemes.
The gain    over  Jin {\em et al.}'s optimized centralized coded caching scheme  follows by using an extended version of the improved delivery strategy of Yu {\em et al.}, that takes advantage of common requests (which occur with positive probability in the case of random requests).
The gain     over Yu {\em et al.}'s centralized coded caching   scheme is due to   exploiting  the explicit knowledge of the file popularity in the optimization of  content placement.\footnote{It  has been  proved in~\cite{jin} that  the   optimized  parameter-based scheme in~\cite{jin} outperforms Maddah-Ali--Niesen's centralized coded caching scheme~\cite{AliFundamental}.}
In addition, the optimized average load  is close to the converse bounds,  implying that  the optimal value obtained by solving  Problem~\ref{Prob:simplify_2} is close to optimal.

\begin{figure}
\begin{center}
  {\resizebox{7cm}{!}{\includegraphics{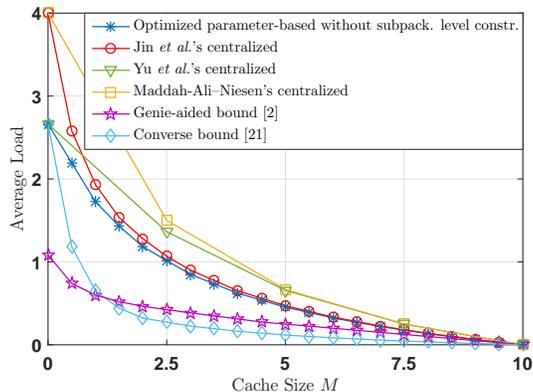}}}
         \caption{Average load versus $M$   in the case without considering  the subpacketization  level  issue  at $K=4$, $N=10$ and $\gamma=1.5$. Note that    Maddah-Ali--Niesen's and Yu {\em et al.}'s centralized coded caching schemes mainly focus on the cache size $M \in \left\{0,\frac{N}{K}, \frac{2N}{K}, \ldots, N\right\}$.  For other  $M \in [0, N]$, the average loads of Maddah-Ali--Niesen's and Yu {\em et al.}'s centralized coded caching schemes are achieved by memory sharing~\cite{AliFundamental,YuQian}.
         }\label{fig:compare}
\end{center}
\end{figure}

Fig.~\ref{fig:compare_subpack}  compares  the  average load  of  our optimized parameter-based scheme  considering    the subpacketization level  constraint, the  average loads  of Tang {\em et al.}'s scheme~\cite{Tang} and  Pareto-optimal PDA~\cite{cheng}  both considering  the subpacketization level  issue.
From Fig.~\ref{fig:compare_subpack}, we see that  the average load  of our optimized parameter-based scheme without considering  the  subpacketization  level  constraint  serves as a lower bound of the  proposed  one considering the  subpacketization  level  constraint.
Note that in Fig.~\ref{fig:compare_subpack}~(a),  for  the considered $K, N, \widehat{F}$, Tang {\em et al.}'s scheme is  applicable only at $M \in \{0, 2.5, 5, 7.5, 10\}$  and Pareto-optimal PDA is   applicable only at  $M \in \{2.5, 5, 7.5, 10\}$;
in Fig.~\ref{fig:compare_subpack}~(b),  for  the considered $K, N, M$, Tang {\em et al.}'s scheme is  applicable only at $\widehat{F}=4$  and Pareto-optimal PDA is   applicable only at  $\widehat{F}\in \{4, 20\}$;
in Fig.~\ref{fig:compare_subpack}~(c),  for   the considered $K, N, M$, Tang {\em et al.}'s scheme and Pareto-optimal PDA are not applicable at  any $\widehat{F}$.
In addition, from  Fig.~\ref{fig:compare_subpack}, we can see that our optimized parameter-based scheme outperforms Tang {\em et al.}'s scheme and  Pareto-optimal PDA in terms of  both  the  average load and  application  region.
From  Fig.~\ref{fig:compare_subpack}, we can see that our optimized parameter-based scheme  considering  the   subpacketization  level  constraint can achieve significantly  lower   subpacketization  level  than the one without considering  the subpacketization level  constraint,  at the cost of   a small  increase  of  the average load.
This means that sacrificing  a  small average load gain can achieve a  huge subpacketization  level  reduction under the considered setting.

\begin{figure}
\begin{center}
    \subfigure[\small{Average load versus $M$ at $K=4$, $N=10$, $\widehat{F}=4$ and $\gamma=1$.}]
  {\resizebox{5cm}{!}{\includegraphics{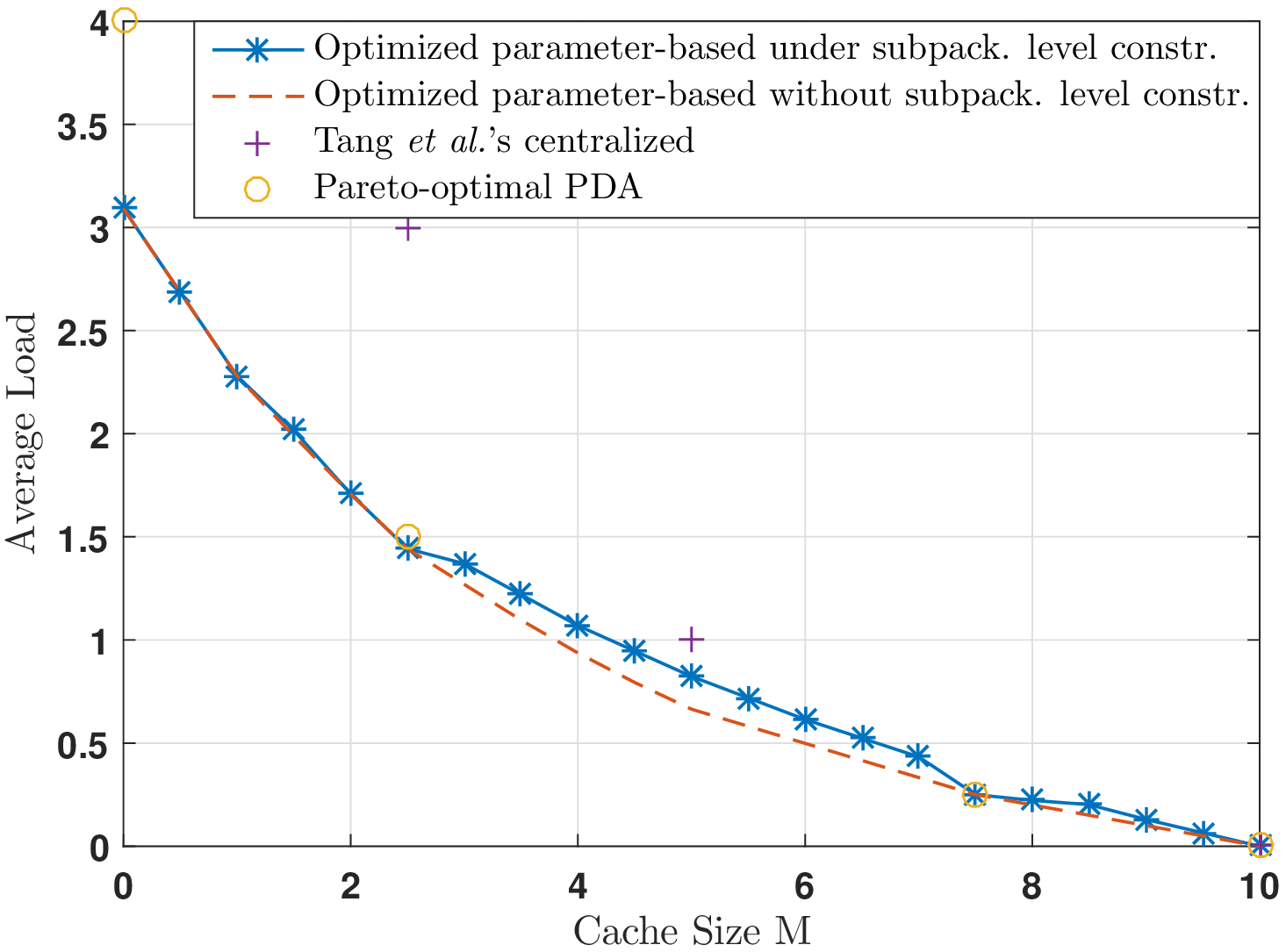}}}
  \subfigure[\small{Average load versus $\widehat{F}$ at  $K=6$, $N=5$, $M=2.5$, and $\gamma=1.4$.}]
  {\resizebox{5cm}{!}{\includegraphics{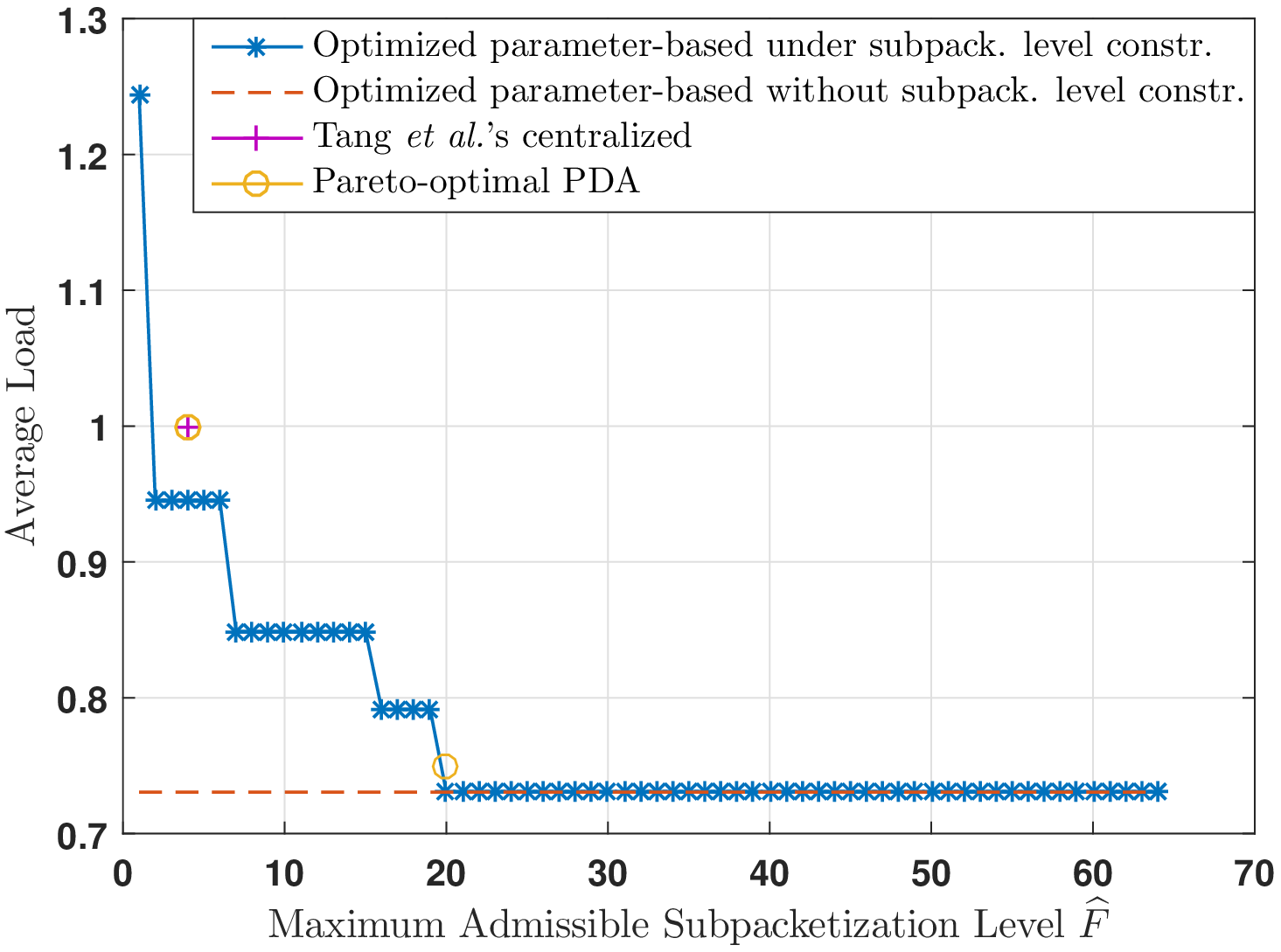}}}
    \subfigure[\small{Average load versus $\widehat{F}$ at  $K=6$, $N=10$, $M=4$, and  $\gamma=0.5$.}]
  {\resizebox{5cm}{!}{\includegraphics{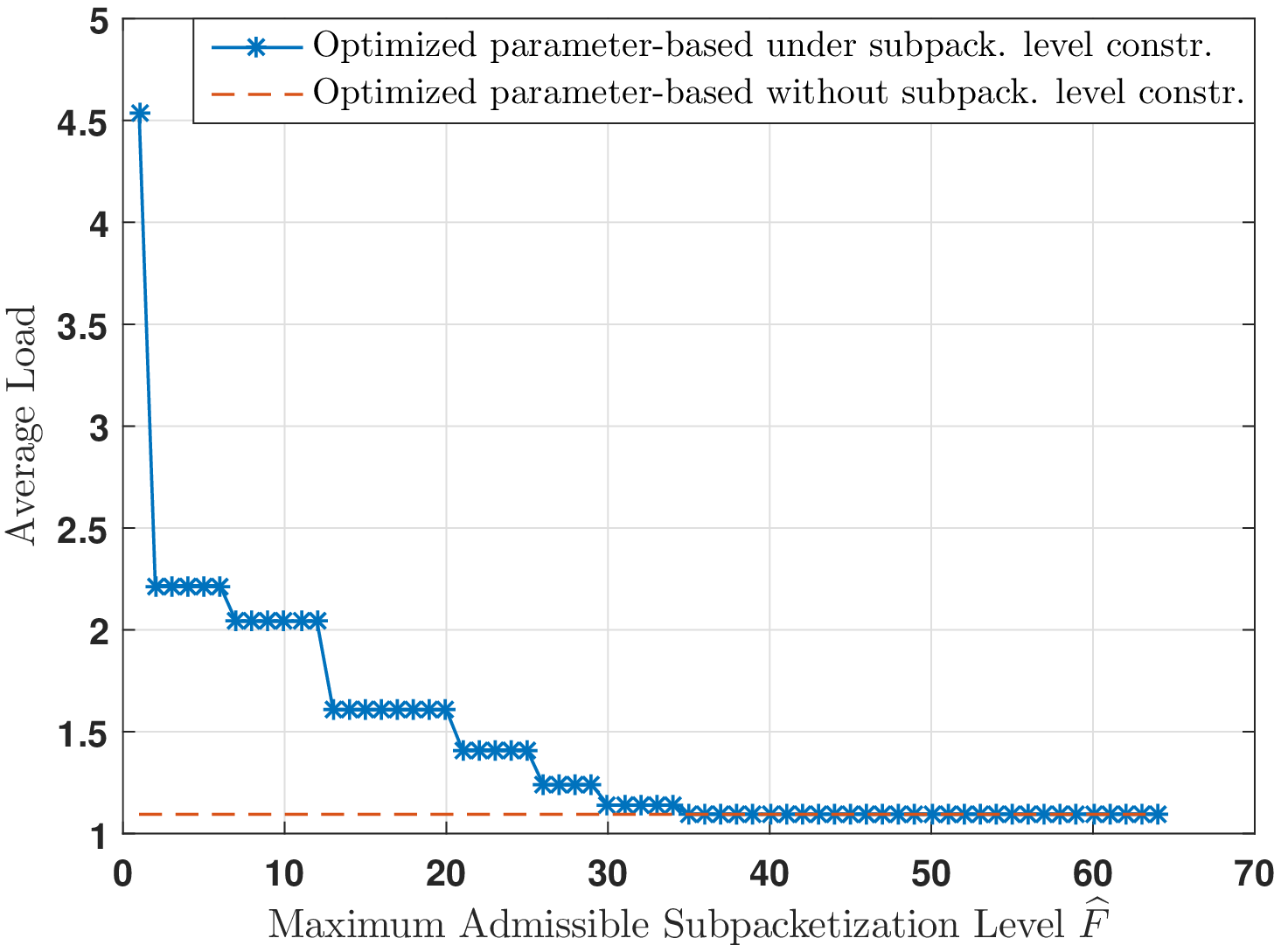}}}
  \end{center}
         \caption{\small{Average load versus  $M$ and  $\widehat{F}$  in the case considering the subpacketization level  issue.
         To obtain the average load of our optimized parameter-based scheme  under   the subpacketization  level  constraint,   we  solve Problem~\ref{Prob:subpack2} by performing    Alg.~\ref{alg:concave}   with $\delta=0.0001$ $100$ times each with a random initial feasible point and
         adopt the locally optimal solution with the lowest average load among all the obtained locally optimal solutions  of Problem~\ref{Prob:subpack2}  that are also feasible solutions of  Problem~\ref{Prob:subpack2}.}
         }\label{fig:compare_subpack}
\end{figure}

\section{Conclusion}
In this paper, we first presented a class of centralized coded caching schemes consisting of a general content placement strategy specified by a file partition parameter, enabling efficient and flexible content placement, and a specific content delivery strategy, enabling load reduction by exploiting common requests of different users.
Then we considered two cases: the case without considering the subpacketization  level  issue and the case considering the subpacketization level  issue.
In the first case, we formulated the coded caching optimization problem over the considered class of schemes  to minimize the average load under an arbitrary file popularity.
Imposing some conditions on the file partition parameter, we transformed the original optimization problem with $N2^K$ variables into a linear program with $N(K + 1)$ variables under an arbitrary file popularity and a linear program with $K+1$ variables under the uniform file popularity.
%We also showed that Yu {\em et al.}'s centralized coded caching scheme corresponds to an optimal solution of our problem and the imposed conditions are optimal properties for the uniform file popularity.
In the second case, we  formulated the coded caching optimization problem over the considered class of schemes to minimize the average load under an arbitrary file popularity  subject to    subpacketization  level  constraints involving  the $\ell_0$-norm.
Imposing the same conditions  and using the exact DC reformulation method, we converted the original problem with $N2^K$ variables into a simplified DC problem with $N(K + 1)$ variables,  which is solved using   DC  algorithm.
Finally, numerical results verify the optimality of the imposed conditions and demonstrate the advantages of the optimized scheme over existing schemes in both cases.

\section*{Appendix A: Proof of Lemma~\ref{Lem:simplification pop}}
By \eqref{eqn:symmetry}, it can be easily shown that the constraints in \eqref{eqn:X_range}, \eqref{eqn:X_sum} and \eqref{eqn:memory_constraint} of Problem~\ref{Prob:original} can be
converted into \eqref{eqn:X_range_2}, \eqref{eqn:X_sum_2} and \eqref{eqn:memory_constraint_2}. \textcolor{black}{Now, we}  show that the objective
function of Problem~\ref{Prob:original} in \eqref{eqn:average_load_1} can be converted into the objective function of Problem~\ref{Prob:simplify_2} in \eqref{eqn:average_load_4}.
First, by \eqref{eqn:symmetry}, we have
\begin{align}
&R_{\rm avg}(K,N,M,\mathbf{x})
\overset{(a)}=\sum_{\mathbf{d} \in \mathcal{N}^K} \left(\prod_{k=1}^{K}p_{d_k}\right) \sum_{s=1}^{K}\left(\sum_{\mathcal{S}   \subseteq \mathcal{K}: |\mathcal{S}|=s } \max_{k \in \mathcal{S}}y_{d_k,s-1} -\sum_{\mathcal{S}   \subseteq \mathcal{K} \setminus \underline{\mathcal{K}}(\mathbf d): |\mathcal{S}|=s } \max_{k \in \mathcal{S}}y_{d_k,s-1}\right) \nonumber\\
\overset{(b)}=&\sum_{s=1}^{K}{K \choose s}\sum_{n=1}^{N}\left(\left(\sum_{n'=n}^{N}p_{n'}\right)^s-\left(\sum_{n'=n+1}^{N}p_{n'}\right)^s\right) y_{n,s-1}
-\sum_{\mathbf{d} \in \mathcal{N}^K} \left(\prod_{k=1}^{K}p_{d_k}\right) \sum_{s=1}^{K}\sum_{\mathcal{S}   \subseteq \mathcal{K} \setminus \underline{\mathcal{K}}(\mathbf d): |\mathcal{S}|=s } \max_{k \in \mathcal{S}}y_{d_k,s-1},  \label{eqn:simp_step1_1}
\end{align}
where (a) is due to \eqref{eqn:symmetry}, and (b) is due to Lemma~3 in~\cite{jin}.
Then, by using the same simplification  method in  the proof of Proposition 1 in~\cite{Wei_Yu}, we \textcolor{black}{further} simplify the \textcolor{black}{second} term  in \eqref{eqn:simp_step1_1} into:
\begin{align}
&\sum_{\mathbf{d} \in \mathcal{N}^K} \left(\prod_{k=1}^{K}p_{d_k}\right) \sum_{s=1}^{K-|\underline{\mathcal{K}}(\mathbf d)|}
\sum_{i=1}^{K-|\underline{\mathcal{K}}(\mathbf d)|}{K-|\underline{\mathcal{K}}(\mathbf d)|-i \choose s-1}    y_{d_{k_i},s-1}\nonumber \\
=& \sum_{u=1}^{\min\{K,N\}}\sum_{\mathbf{d} \in \mathcal D_u} \left(\prod_{k=1}^{K}p_{d_k}\right) \sum_{s=1}^{K-u}
\sum_{i=1}^{K-u-s+1}{K-u-i \choose s-1}   y_{d_{k_i},s-1} \nonumber\\
=& \sum_{u=1}^{\min\{K,N\}} \sum_{s=1}^{K-u}
\sum_{i=1}^{K-u-s+1}{K-u-i \choose s-1}\sum_{\mathbf{d} \in \mathcal D_u} \left(\prod_{k=1}^{K}p_{d_k}\right)   y_{d_{k_i},s-1} \nonumber\\
=& \sum_{u=1}^{\min\{K,N\}} \sum_{s=1}^{K-u}
\sum_{i=1}^{K-u-s+1}{K-u-i \choose s-1}\sum_{n=1}^{N} P'_{i,u,n} y_{n,s-1},  \nonumber
\end{align}
where $d_{k_i}$ denotes  the $i$-th most popular file in $\left(D_k\right)_{k \in \mathcal K \setminus \underline{\mathcal{K}}(\mathbf d)}$, and
$\mathcal D_u \triangleq \left\{\mathbf{d} \in \mathcal N^K: \sum_{n=1}^{N} \mathbf 1 [d_n>0]=u\right\}.$

\textcolor{black}{It remains to}  derive  $P'_{i,u,n}$ by connecting the event $\widetilde{D}_{u,\langle i\rangle}=n$ to the ``balls into bins" problem.
\textcolor{black}{Note that the event} that $K$ balls are placed \textcolor{black}{in an} i.i.d. \textcolor{black}{manner} into $N$ bins   \textcolor{black}{where each of the $K$ balls is placed} into bin $n'$ (the bin with index $n'$) with probability  $p_{n'}$  corresponds to the \textcolor{black}{event} that  each user randomly and independently requests  file $n' \in \mathcal N$ with probability $p_{n'}$ \textcolor{black}{(represented by random variable $D_k, k \in \mathcal{K}$)}.
\textcolor{black}{Let $\mathcal{E}_u$ denote the event that there are exactly   $u$} nonempty bins, \textcolor{black}{which corresponds to the event that there are} $u$ representative users, i.e., $|\underline{\mathcal{K}}(\mathbf D)|=u$.
\textcolor{black}{Let $\mathcal{E}^{1, \{1,2, \ldots, n-1\}}_{b_1, a, u}$ denote the event that  $b_1$ balls fall into  $a$  different bins with indices smaller than or equal to $n-1$,    which corresponds to the event that there are $b_1$ users requesting  $a$ different files with file indices smaller than or equal to $n-1$.
Let $\mathcal{E}^{2, \{n+1,n+2, \ldots, N\}}_{b_2, u-a-1, u}$ denote the event that   $b_2$ balls fall into  $u-a-1$ different bins with indices larger than or equal to $n+1$, which corresponds to the event that there are $b_2$ users requesting  $u-a-1$ different files with file indices larger than or equal to $n+1$.
Let $\mathcal{E}^{3, \{n\}}_{b_3, 1, u}$ denote the event that $b_3$ balls fall into bin $n$, which corresponds to the event that there are $b_3$ users requesting   file $n$.
Let \textcolor{black}{$\Theta_{a}(u,n)$} denote the \textcolor{black}{range} of $a$  \textcolor{black}{in $\mathcal{E}^{1, \{1,2, \ldots, n-1\}}_{b_1, a, u}$ and $\mathcal{E}^{2, \{n+1,n+2, \ldots, N\}}_{b_2, u-a-1, u}$}  and let \textcolor{black}{$\Theta_b(n,u, a)$} denote the range of $(b_1, b_2,b_3)$ \textcolor{black}{in $\mathcal{E}^{1, \{1,2, \ldots, n-1\}}_{b_1, a, u}$, $\mathcal{E}^{2, \{n+1,n+2, \ldots, N\}}_{b_2, u-a-1, u}$ and $\mathcal{E}^{3, \{n\}}_{b_3, 1, u}$} for a given $a$, where
\begin{align}
&\textcolor{black}{\Theta_{a}(u,n)}=\begin{cases}
\{0,1, \ldots, u-1\}, & u \leq n, u+n \leq N+1 \\
\{u-1+n-N,  \ldots, u-1\}, & u \leq n,  u+n > N+1\\
\{0, \ldots, n-1\}, & u > n,  u+n \leq N+1\\
\{u-1+n-N, \ldots, n-1\}, & u > n,  u+n > N+1
\end{cases},\label{eqn:Theta_a}
\end{align}
\textcolor{black}{and
\small{\begin{align}
&\textcolor{black}{\Theta_b(n,u, a)}=\begin{cases}
\{(b_1, b_2, b_3): b_1 \in \{a,   \ldots, K\}, b_2 \in  \{u-a-1,  \ldots, K\}, b_1+b_2+b_3=K\}, & a \in \{1, \ldots, u-2\} \cap \Theta_{a}(u,n)\\
\{(b_1, b_2, b_3): b_1 =0, b_2 \in  \{u-1,   \ldots, K\}, b_1+b_2+b_3=K\}, & a \in \{0\} \cap \Theta_{a}(u,n)\\
\{(b_1, b_2, b_3): b_1 \in \{u-1,  \ldots, K\}, b_2 =0, b_1+b_2+b_3=K\}, & a \in \{u-1\} \cap  \Theta_{a}(u,n).
\end{cases} \label{eqn:Theta_b}
\end{align}}}
We know that
\begin{align}
\mathcal{E}_u=\bigcup_{a \in \Theta_{a}(u,n)}\bigcup_{(b_1,b_2,b_3) \in \Theta_b(a)}\left(\mathcal{E}^{1, \{1,2, \ldots, n-1\}}_{b_1, a, u} \cap \mathcal{E}^{2, \{n+1,n+2, \ldots, N\}}_{b_2, u-a-1, u} \cap \mathcal{E}^{3, \{n\}}_{b_3, 1, u}\right). \label{eqn:E_u}
\end{align}}
\textcolor{black}{Then, remove  one ball from each of the $|\underline{\mathcal{K}}(\mathbf D)|$ nonempty bins} \textcolor{black}{and consider the remaining balls, which corresponds to consider  the requests in $(D_k)_{k \in \mathcal K \setminus \underline{\mathcal{K}}(\mathbf D)}$.}
\textcolor{black}{Let $\xi^{1, \{1,2,\ldots, n-1\}}_{i-1}$ denote the event that   there are at most $i-1$ balls placed in the bins \textcolor{black}{with} indices  smaller than or equal to $n-1$
and let $\xi^{2, \{1,2,\ldots, n\}}_{i}$ denote the event that  there are at least $i$ balls placed in the bins \textcolor{black}{with} indices  smaller than or equal to $n$.}
\textcolor{black}{Note that   $\xi^{1, \{1,2,\ldots, n-1\}}_{i-1} \cap \xi^{2, \{1,2,\ldots, n\}}_{i} \cap \mathcal{E}_u$  \textcolor{black}{is equivalent to} the event $\widetilde{D}_{u,\langle i\rangle}=n$,
implying that  $P'_{i,u,n} = \Pr\left[\widetilde{D}_{u,\langle i\rangle}=n\right]=\Pr\left[\xi^{1, \{1,2,\ldots, n-1\}}_{i-1} \cap \xi^{2, \{1,2,\ldots, n\}}_{i}\cap \mathcal{E}_u\right]$.}
\textcolor{black}{\textcolor{black}{Similar to \eqref{eqn:E_u}, $\xi^{1, \{1,2,\ldots, n-1\}}_{i-1} \cap \xi^{2, \{1,2,\ldots, n\}}_{i} \cap \mathcal{E}_u$   can be represented    as
\begin{align}
&\xi^{1, \{1,2,\ldots, n-1\}}_{i-1} \cap \xi^{2, \{1,2,\ldots, n\}}_{i} \cap \mathcal{E}_u \nonumber \\
=&\bigcup_{a \in \Theta_{a}(u,n)}\bigcup_{(b_1,b_2,b_3) \in \widetilde{\Theta}_b(i,u,n,a)}\left(\mathcal{E}^{1, \{1,2, \ldots, n-1\}}_{b_1, a, u} \cap \mathcal{E}^{2, \{n+1,n+2, \ldots, N\}}_{b_2, u-a-1, u} \cap \mathcal{E}^{3, \{n\}}_{b_3, 1, u}\right), \label{eqn:xi_E_u}
\end{align}
where $\widetilde{\Theta}_b(i,u,n,a)$ denotes the range of $(b_1,b_2,b_3)$.
In the following, we determine $\widetilde{\Theta}_b(i,u,n,a)$:}
\begin{itemize}
\item For any $a \in \{1, \ldots, u-2\} \cap \textcolor{black}{\Theta_{a}(u,n)}$,   by   $\mathcal{E}^{1, \{1,2, \ldots, n-1\}}_{b_1, a, u}$ and $\xi^{1, \{1,2,\ldots, n-1\}}_{i-1}$, we have
   \begin{align}
   b_1 \leq  i+a-1; \label{eqn:b_1_1}
   \end{align}
      by $\mathcal{E}^{2, \{n+1,n+2, \ldots, N\}}_{b_2, u-a-1, u}$ and $\xi^{2, \{1,2,\ldots, n\}}_{i}$, we have
\begin{align}
b_2 \leq K-i-a-1.\label{eqn:b_2_1}
\end{align}
By \eqref{eqn:b_1_1}, \eqref{eqn:b_2_1}  and \eqref{eqn:Theta_b}, for  all $a \in \{1, \ldots, u-2\} \cap \textcolor{black}{\Theta_{a}(u,n)}$,  we have
\begin{align}
&\widetilde{\Theta}_b(i,u,n,a) \nonumber \\
=&\left\{(b_1,b_2,b_3):b_1 \in \{a, \ldots, i+a-1\}, b_2 \in \{u-a-1, \ldots, K-i-a-1\}, b_1+b_2 +b_3= K\right\}. \label{eqn:Theta_b_1}
\end{align}
\item For $a \in \{0\} \cap \textcolor{black}{\Theta_{a}(u,n)}$,  by $\mathcal{E}^{1, \{1,2, \ldots, n-1\}}_{b_1, a, u}$ and $\xi^{1, \{1,2,\ldots, n-1\}}_{i-1}$, we have
    \begin{align}
    i=1, b_1=0,\label{eqn:b_1_2}
    \end{align}
     implying that $\mathcal{E}^{1, \{1,2, \ldots, n-1\}}_{b_1, a, u}$ does not happen;     by $\mathcal{E}^{2, \{n+1,n+2, \ldots, N\}}_{b_2, u-a-1, u}$ and $\xi^{2, \{1,2,\ldots, n\}}_{i}$, we have
      \begin{align}
      b_2 \leq  K-i-a-1. \label{eqn:b_2_2}
      \end{align}
      By \eqref{eqn:b_1_2}, \eqref{eqn:b_2_2} and \eqref{eqn:Theta_b}, for all  $a \in \{0\} \cap \textcolor{black}{\Theta_{a}(u,n)}$, we have
      \begin{align}
      \widetilde{\Theta}_b(i,u,n,a)=\left\{(b_1, b_2,b_3):b_1=0,  b_2 \in \{u-1, \ldots, K-a-2\},  b_1+b_2 +b_3= K\right\}.  \label{eqn:Theta_b_2}
      \end{align}
\item   For $a \in \{u-1\} \cap \textcolor{black}{\Theta_{a}(u,n)}$, by $\mathcal{E}^{1, \{1,2, \ldots, n-1\}}_{b_1, a, u}$ and $\xi^{1, \{1,2,\ldots, n-1\}}_{i-1}$, we have
    \begin{align}
    i=K-u, b_2=0,\label{eqn:b_1_3}
    \end{align}
     implying that $\mathcal{E}^{2, \{n+1,n+2, \ldots, N\}}_{b_2, u-a-1, u}$ does not happen;    by $\mathcal{E}^{2, \{n+1,n+2, \ldots, N\}}_{b_2, u-a-1, u}$ and $\xi^{2, \{1,2,\ldots, n\}}_{i}$, we have
     \begin{align}
     b_1 \leq  a+i-1.\label{eqn:b_2_3}
     \end{align}
     By \eqref{eqn:b_1_3}, \eqref{eqn:b_2_3} and \eqref{eqn:Theta_b}, for all  $a \in \{u-1\} \cap \textcolor{black}{\Theta_{a}(u,n)}$, we have
     \begin{align}
     \widetilde{\Theta}_b(i,u,n,a)=\left\{(b_1, b_2,b_3): b_1 \in \{u-1, \ldots, K-u+a-1 \}, b_2 =0,  b_1+b_2 +b_3= K\right\}.  \label{eqn:Theta_b_3}
     \end{align}
\end{itemize}
By \eqref{eqn:xi_E_u}, \eqref{eqn:Theta_b_1},  \eqref{eqn:Theta_b_2} and \eqref{eqn:Theta_b_3},  we have
\begin{align}
&\xi^{1, \{1,2,\ldots, n-1\}}_{i-1} \cap \xi^{2, \{1,2,\ldots, n\}}_{i} \cap \mathcal{E}_u \nonumber\\
=&\Bigg(\bigcup_{a \in \{1, \ldots, u-2\} \cap \textcolor{black}{\Theta_{a}(u,n)}}\bigcup_{(b_1,b_2,b_3) \in \widetilde{\Theta}_b(i,u,n,a)}
\left(\mathcal{E}^{1, \{1,2, \ldots, n-1\}}_{b_1, a, u}\cap \mathcal{E}^{2, \{n+1,n+2, \ldots, N\}}_{b_2, u-a-1, u}\cap \mathcal{E}^{3, \{n\}}_{K-b_1-b_2, 1, u}\right)\Bigg)\nonumber\\
\bigcup &\left(\bigcup_{a \in \{0\} \cap \textcolor{black}{\Theta_{a}(u,n)}}  \bigcup_{(b_1,b_2,b_3) \in \widetilde{\Theta}_b(i,u,n,a)} \left(\mathcal{E}^{2, \{n+1,n+2, \ldots, N\}}_{b_2, u-a-1, u}\cap \mathcal{E}^{3, \{n\}}_{K-b_2, 1, u}\right)\right)\nonumber
\end{align}
\begin{align}
\bigcup &\left(\bigcup_{a \in \{u-1\} \cap \textcolor{black}{\Theta_{a}(u,n)}} \bigcup_{(b_1,b_2,b_3) \in \widetilde{\Theta}_b(i,u,n,a)}\left(\mathcal{E}^{1, \{1,2, \ldots, n-1\}}_{b_1, a, u}\cap \mathcal{E}^{3, \{n\}}_{K-b_1, 1, u}\right)\right). \label{eqn:convert_xi_e}
\end{align}
Based on \eqref{eqn:convert_xi_e}, we have
\begin{align}
P'_{i,u,n}&=\sum_{a \in \{1, \ldots, u-2\} \cap \textcolor{black}{\Theta_{a}(u,n)}}\sum_{b_1 \in \{a, \ldots, i+a-1\}}\sum_{b_2 \in \{u-a-1, \ldots, K-i-a-1\}}{K \choose b_1,K-b_1-b_2, b_2}  \nonumber \\
& \times \Pr\left[\mathcal{E}^{1, \{1,2, \ldots, n-1\}}_{b_1, a, u} \cap \mathcal{E}^{2, \{n+1,n+2, \ldots, N\}}_{b_2, u-a-1, u} \cap \mathcal{E}^{3, \{n\}}_{K-b_1-b_2, 1, u}\right] \nonumber \\
& + \sum_{a \in \{0\} \cap \textcolor{black}{\Theta_{a}(u,n)}} \sum_{b_2 \in \{u-1, \ldots, K-a-2\}} {K \choose  b_2}  \Pr\left[ \mathcal{E}^{2, \{n+1,n+2, \ldots, N\}}_{b_2, u-a-1, u} \cap \mathcal{E}^{3, \{n\}}_{K-b_2, 1, u}\right]\nonumber \\
& + \sum_{a \in \{u-1\} \cap \textcolor{black}{\Theta_{a}(u,n)}} \sum_{b_1 \in \{u-1, \ldots, K-u+a-1 \}} {K \choose  b_1}   \Pr\left[\mathcal{E}^{1, \{1,2, \ldots, n-1\}}_{b_1, a, u} \cap \mathcal{E}^{3, \{n\}}_{K-b_1, 1, u}\right], \label{eqn:3_cond_1}
\end{align}
where ${K \choose b_1,K-b_1-b_2, b_2}$
is the total number  of partitions of  $K$ balls into three parts   with the numbers of balls  $b_1$, $b_2$ and $K-b_1-b_2$,
${K \choose  b_2}$
is the total number  of partitions of  $K$ balls into two parts  with the numbers of balls $b_2$ and $K-b_2$,
and ${K \choose  b_1}$
is the total number  of partitions of  $K$ balls into two parts  with the  numbers of balls $b_1$ and $K-b_1$.
To calculate \eqref{eqn:3_cond_1}, we first calculate $\Pr\left[\mathcal{E}^{1, \{1,2, \ldots, n-1\}}_{b_1, a, u} \cap \mathcal{E}^{2, \{n+1,n+2, \ldots, N\}}_{b_2, u-a-1, u}  \cap \mathcal{E}^{3, \{n\}}_{K-b_1-b_2, 1, u}\right]$ for \textcolor{black}{all} $a \in \{1, \ldots, u-2\} \cap \textcolor{black}{\Theta_{a}(u,n)}$, $b_1 \in \{a, a+1, \ldots, i+a-1\}$ and $b_2\in  \{u-a-1,u-a, \ldots, K-i-a-1\}$.
By using  results from ``balls into bins" problem,
we have
\begin{align}
&\Pr\left[\mathcal{E}^{1, \{1,2, \ldots, n-1\}}_{b_1, a, u} \cap \mathcal{E}^{2, \{n+1,n+2, \ldots, N\}}_{b_2, u-a-1, u} \cap \mathcal{E}^{3, \{n\}}_{K-b_1-b_2, 1, u}\right] \nonumber \\
=&\Pr \left[\mathcal{E}^{1, \{1,2, \ldots, n-1\}}_{b_1, a, u}\right] \Pr \left[\mathcal{E}^{2, \{n+1,n+2, \ldots, N\}}_{b_2, u-a-1, u}\big| \mathcal{E}^{1, \{1,2, \ldots, n-1\}}_{b_1, a, u}\right]   \nonumber \\
& \times \Pr \left[\mathcal{E}^{3, \{n\}}_{K-b_1-b_2, 1, u}\big|\mathcal{E}^{1, \{1,2, \ldots, n-1\}}_{b_1, a, u} \cap \mathcal{E}^{2, \{n+1,n+2, \ldots, N\}}_{b_2, u-a-1, u}\right], \label{eqn:3_cond_1_1}
\end{align}
where
$\Pr \left[\mathcal{E}^{1, \{1,2, \ldots, n-1\}}_{b_1, a, u}\right]=\sum_{\mathcal L_1 \in \mathcal L_{a,1}}\sum_{(\alpha_{n'})_{n' \in \mathcal L_1} \in \mathcal A_{b_1, \mathcal L_1}} \frac{b_1!}{\prod_{n' \in \mathcal L_1}\alpha_{n'}!}\prod_{n' \in \mathcal L_1}P_{n'}^{\alpha_{n'}},$
$$\Pr \left[\mathcal{E}^{2, \{n+1,n+2, \ldots, N\}}_{b_2, u-a-1, u}\big|\mathcal{E}^{1, \{1,2, \ldots, n-1\}}_{b_1, a, u}\right]=\sum_{\mathcal L_2 \in \mathcal L_{a,2}}\sum_{(\alpha_{n'})_{n' \in \mathcal L_2} \in  \mathcal A_{b_2, \mathcal L_2}}  \frac{b_2!}{\prod_{n' \in \mathcal L_2}\alpha_{n'}!}\prod_{n' \in \mathcal L_2}P_{n'}^{\alpha_{n'}},$$
$\Pr \left[\mathcal{E}^{3, \{n\}}_{K-b_1-b_2, 1, u}\big|  \mathcal{E}^{1, \{1,2, \ldots, n-1\}}_{b_1, a, u} \cap \mathcal{E}^{2, \{n+1,n+2, \ldots, N\}}_{b_2, u-a-1, u}\right]=P_n^{K-b_1-b_2}.$
Next, we calculate
$$\Pr\left[\mathcal{E}^{2, \{n+1,n+2, \ldots, N\}}_{b_2, u-a-1, u}, \mathcal{E}^{3, \{n\}}_{K-b_2, 1, u}\right]$$ for \textcolor{black}{all} $a \in \{0\} \cap \textcolor{black}{\Theta_{a}(u,n)}$,  $b_1=0$, \textcolor{black}{and} $b_2\in  \{u-1,  \ldots, K-a-i-1\}$.
By using  results from ``balls into bins" problem, we have
\begin{align}
\Pr\left[\mathcal{E}^{2, \{n+1,n+2, \ldots, N\}}_{b_2, u-a-1, u} \cap \mathcal{E}^{3, \{n\}}_{K-b_2, 1, u}\right]= \Pr \left[\mathcal{E}^{2, \{n+1,n+2, \ldots, N\}}_{b_2, u-a-1, u}\right]
\Pr \left[\mathcal{E}^{3, \{n\}}_{K-b_2, 1, u}\big| \mathcal{E}^{2, \{n+1,n+2, \ldots, N\}}_{b_2, u-a-1, u}\right], \label{eqn:3_cond_1_2}
\end{align}
where
$ \Pr \left[\mathcal{E}^{2, \{n+1,n+2, \ldots, N\}}_{b_2, u-a-1, u}\right]=\sum_{\mathcal L_2 \in \mathcal L_{0,2}}\sum_{(\alpha_{n'})_{n' \in \mathcal L_2} \in  \mathcal A_{b_2, \mathcal L_2}} \frac{b_2!}{\prod_{n' \in \mathcal L_2}\alpha_{n'}!}\prod_{n' \in \mathcal L_2}P_{n'}^{\alpha_{n'}},$
and $$\Pr \left[\mathcal{E}^{3, \{n\}}_{K-b_2, 1, u}\big|  \mathcal{E}^{2, \{n+1,n+2, \ldots, N\}}_{b_2, u-a-1, u}\right]=P_n^{K-b_2}.$$
Finally, we calculate $\Pr\left[\mathcal{E}^{1, \{1,2, \ldots, n-1\}}_{b_1, a, u} \cap \mathcal{E}^{3, \{n\}}_{K-b_1, 1, u}\right]$ for \textcolor{black}{all} $a \in \{u-1\} \cap \textcolor{black}{\Theta_{a}(u,n)}$,   $b_2=0$ and $b_1 \in \{u-1, \ldots, a+i-1\}$.
By using  results from ``balls into bins" problem, we have
\begin{align}
\Pr\left[\mathcal{E}^{1, \{1,2, \ldots, n-1\}}_{b_1, a, u} \cap \mathcal{E}^{3, \{n\}}_{K-b_1, 1, u}\right]=\Pr \left[\mathcal{E}^{1, \{1,2, \ldots, n-1\}}_{b_1, a, u}\right]
\Pr \left[\mathcal{E}^{3, \{n\}}_{K-b_1, 1, u}\big| \mathcal{E}^{1, \{1,2, \ldots, n-1\}}_{b_1, a, u}\right], \label{eqn:3_cond_1_3}
\end{align}
where $\Pr \left[\mathcal{E}^{3, \{n\}}_{K-b_1, 1, u}\big| \mathcal{E}^{1, \{1,2, \ldots, n-1\}}_{b_1, a, u}\right]=P_n^{K-b_1},$
$$\Pr \left[\mathcal{E}^{1, \{1,2, \ldots, n-1\}}_{b_1, a, u}\right]=\sum_{\mathcal L_1 \in \mathcal L_{u-1,1}}\sum_{(\alpha_{n'})_{n' \in \mathcal L_1} \in  \mathcal A_{b_1, \mathcal L_1}}  \frac{b_1!}{\prod_{n' \in \mathcal L_1}\alpha_{n'}!}\prod_{n' \in \mathcal L_1}P_{n'}^{\alpha_{n'}}.$$
Substituting \eqref{eqn:3_cond_1_1}, \eqref{eqn:3_cond_1_2} and \eqref{eqn:3_cond_1_3} into \eqref{eqn:3_cond_1}, we can obtain $P'_{i,u,n}$   given in \eqref{eqn:region1}-\eqref{eqn:region4}.
Therefore, we complete the proof of Lemma~\ref{Lem:simplification pop}.}

\section*{Appendix B: Proof of Lemma~\ref{Lem:uniform_obj_simp}}
By~\eqref{eqn:symmetry} and  \eqref{eqn:symmetry_2}, it can be easily shown that the constraints in \eqref{eqn:X_range}, \eqref{eqn:X_sum} and \eqref{eqn:memory_constraint} of Problem~\ref{Prob:original} can be converted into \eqref{eqn:X_range_3}, \eqref{eqn:X_sum_3} and \eqref{eqn:memory_constraint_3}.
\textcolor{black}{Now, we}  show that the objective
function of Problem~\ref{Prob:original} in \eqref{eqn:average_load_1} can be converted into the objective function of Problem~\ref{Prob:equivalent_3}.
By \eqref{eqn:symmetry} and \eqref{eqn:symmetry_2}, we have
\begin{align}
&R_{\rm avg}(K,N,M,\mathbf{x})
\overset{(a)}= \sum_{\mathbf{d} \in \mathcal{N}^K} \left(\prod_{k=1}^{K}p_{d_k}\right) \sum_{s=1}^{K}\sum_{\mathcal{S}   \subseteq \mathcal{K}: |\mathcal{S}|=s} z_{s-1} -\sum_{\mathbf{d} \in \mathcal{N}^K} \left(\prod_{k=1}^{K}p_{d_k}\right) \sum_{s=1}^{K}\sum_{\mathcal{S}   \subseteq \mathcal{K} \setminus \underline{\mathcal{K}}(\mathbf d): |\mathcal{S}|=s }   z_{s-1}  \nonumber \\
=&\sum_{s=1}^{K}{K \choose s}z_{s-1}- \sum_{u=1}^{\min \{K,N\}} \Pr \left[|\underline{\mathcal{K}}(\mathbf D)|=u\right] \sum_{s=1}^{K}{K-u \choose s} z_{s-1}
\end{align}
where (a) is due to \eqref{eqn:symmetry} and \eqref{eqn:symmetry_2}.
Therefore, we complete the proof of Lemma~\ref{Lem:uniform_obj_simp}.

\section*{Appendix C: Proof of Lemma~\ref{Lem:Ali}}
The Lagrangian of Problem~\ref{Prob:equivalent_3}  is given by
$
L(\mathbf{z}, \boldsymbol \eta, \theta, \nu)
=\sum_{s=0}^{K-1}{K \choose s+1}  z_{s} + \sum_{s=0}^{K} \eta_{s} \left(-z_{s}\right) - \sum_{u=1}^{\min \{K,N\}} P''_{u}
\sum_{s=0}^{K-u-1}{K-u \choose s+1} z_{s} +\theta  \left(\sum_{s=0}^{K} {K \choose s}s z_{s}-\frac{KM}{N}\right)
+  \nu \left(1-\sum_{s=0}^{K} {K \choose s} z_{s}\right),
$
where $\eta_{s} \geq 0$ is the Lagrange multiplier associated with~\eqref{eqn:X_range_3}, $\nu$ is the Lagrange multiplier associated with~\eqref{eqn:X_sum_3}, $\theta$ is the Lagrange multiplier associated with~\eqref{eqn:memory_constraint_3} and $\boldsymbol \eta \triangleq (\eta_{s})_{s \in \{0,1,\ldots,K\}}$. Thus, we have
\begin{align}
\frac{\partial L}{\partial z_{s}}(\mathbf{z}, \boldsymbol \eta,  \theta, \nu)={K \choose s+1}- \sum_{u=1}^{\min \{K,N\}} P''_{u}{K-u \choose s+1}-\eta_{s}+\theta s{K \choose s}-\nu {K \choose s}.
\end{align}
Since Problem~\ref{Prob:equivalent_3} is a linear programming,  $\mathbf{z}^*$ is an optimal solution of Problem~\ref{Prob:equivalent_3} if  $\mathbf{z}^*$,  $\boldsymbol \eta^*, \nu^*, \theta^*$ satisfy KKT conditions, i.e., (i) primal constraints: \eqref{eqn:X_range_3}, \eqref{eqn:X_sum_3}, \eqref{eqn:memory_constraint_3}, (ii) dual constraints: (a) $\theta \geq 0$  and (b) $\eta_{s} \geq 0$    for all $s \in \{0,1,\ldots,K\}$,
(iii) complementary slackness: (a) $\eta_{s} \left(-z_{s}\right)=0$  for all $s \in \{0,1,\ldots,K\}$ and (b) $\theta  \left(\sum_{s=0}^{K} {K \choose s}s z_{s}-\frac{KM}{N}\right)=0$,  and (iv) $\frac{\partial L}{\partial z_{s}}(\mathbf{z}, \boldsymbol \eta,  \theta, \nu)=0$  for all $s \in \{0,1,\ldots,K\}$.
\textcolor{black}{In the following, we obtain $\mathbf{z}^*$,  $\boldsymbol \eta^*$, $\nu^*$, and $\theta^*$ by considering three cases.}

\textcolor{black}{Case 1: When} $M=0$, it can be easily verified that $\mathbf z^*$ given in  \eqref{eqn:uniform_caching}, any
$\nu^*=\sum_{u=1}^{\min \{K,N\}} P''_{u}u,$
$\eta^*_{s}=\sum_{u=1}^{\min \{K,N\}} P''_{u} \left(-u{K \choose s}-{K-u \choose s+1}+{K \choose s+1}\right)+\theta^*s {K \choose s}$,
$$\theta^* \in \left[ \max\left\{\max_{s \in \{1,2, \ldots, K\}}\left\{\sum_{u=1}^{\min \{K,N\}} P''_{u}\frac{u{K \choose s}+{K-u \choose s+1}-{K \choose s+1}}{s {K \choose s}}\right\},0\right\},+\infty\right),$$
satisfy the KKT conditions in (i)-(iv).
Thus, we know that \textcolor{black}{when}  $M=0$, $\mathbf{z}^*$ given in \eqref{eqn:uniform_caching} is an optimal solution of Problem~\ref{Prob:equivalent_3}.

\textcolor{black}{Case 2: When} $M=N$, it can be easily verified that $\mathbf z^*$ given in  \eqref{eqn:uniform_caching}, any $\nu^*=K\theta^*, $
$\theta^* \in \left[0, \min_{s \in \{0,1, \ldots, K-1\}}\left\{\sum_{u=1}^{\min \{K,N\}} P''_{u}\frac{u{K \choose s}+{K-u \choose s+1}-{K \choose s+1}}{s {K \choose s}}\right\}\right],$
$\eta^*_{s}={K \choose s+1}-\sum_{u=1}^{\min \{K,N\}} P''_{u} {K-u \choose s+1} +\theta^*(s-K) {K \choose s}$
satisfy the KKT conditions in (i)-(iv).
Thus, we know that \textcolor{black}{when} $M=N$, $\mathbf{z}^*$ given in \eqref{eqn:uniform_caching} is an optimal solution of Problem~\ref{Prob:equivalent_3}.

\textcolor{black}{Case 3: When}  any    $M \in \left\{ \frac{N}{K}, \frac{2N}{K}, \ldots, \frac{(K-1)N}{K}\right\}$, we prove that $\mathbf{z}^*$ given in \eqref{eqn:uniform_caching} is an   optimal solution of Problem~\ref{Prob:equivalent_3}  by proving that
$\mathbf z^*$ given in  \eqref{eqn:uniform_caching},
\begin{align}
\theta^* =-\sum_{u=1}^{\min \{K,N\}} P''_{u} \sum_{k=K-u+1}^{K}g'_k\left(\frac{KM}{N}\right), \label{eqn:theta}
\end{align}
\begin{align}
\nu^* = \sum_{u=1}^{\min \{K,N\}} P''_{u} \sum_{k=K-u+1}^{K}\left(g_k\left(\frac{KM}{N}\right)-g'_k\left(\frac{KM}{N}\right)\frac{KM}{N}\right),\label{eqn:nu}
\end{align}
\begin{align}
\eta^*_{s}= {K \choose s}\sum_{u=1}^{\min \{K,N\}} P''_{u}\sum_{k=K-u+1}^{K}\left(g_k(s)-\left(g_k\left(\frac{KM}{N}\right)+g'_k\left(\frac{KM}{N}\right)\left(s-\frac{KM}{N}\right)\right)\right),\nonumber \\
 s \in \{0,1,\ldots,K\},\label{eqn:eta}
\end{align}
satisfy the KKT conditions in (i)-(iv), where
\begin{align}
g_k(s) \triangleq
\begin{cases}
h_k(s) \triangleq  \frac{\prod_{i=0}^{K-k}(K-s-i)}{\prod_{j=0}^{K-k}(K-j)}, & s \in [0,k)\\
0, &  s \in [k,K]
\end{cases},
\ k \in \{K-u+1, \ldots, K\}.
\end{align}
\textcolor{black}{In the following, we show that $\mathbf{z}^*$  in \eqref{eqn:uniform_caching}, $\theta^*$ in \eqref{eqn:theta},  $\nu^*$ in \eqref{eqn:nu}, and $\boldsymbol \eta^*$  in \eqref{eqn:eta} satisfy KKT conditions (i), (ii), (iii), and (iv), respectively.}
\begin{itemize}
\item  Prove that  $\mathbf{z}^*$ satisfies (i).
By substituting $\mathbf{z}^*$  into the primal constraints in \eqref{eqn:X_range_3}, \eqref{eqn:X_sum_3} and \eqref{eqn:memory_constraint_3},  we can easily verify that $\mathbf{z}^*$ satisfies (i).
    Thus, we complete proving  that $\mathbf{z}^*$ satisfies (i).
\item  Prove that $\theta^*$ satisfies (ii.a) and $\boldsymbol \eta^*$ satisfies  (ii.b).
\textcolor{black}{First, we show that $\theta^*$ satisfies (ii.a)}  by proving $g'_k\left(\frac{KM}{N}\right) \leq 0$. Since
\begin{align}
g'_k\left(s\right) =
\begin{cases}
h'_k(s)= -\frac{\prod_{i=0}^{K-k}(K-s-i)}{\prod_{j=0}^{K-k}(K-j)}\sum_{i=0}^{K-k}\frac{1}{K-s-i}, & s \in (0,k)\\
0, & s \in [k,K)
\end{cases},
\end{align}
we have $g'_k\left(s\right) \leq 0$ for any $s \in (0,K)$.
Since $M \in \left\{ \frac{N}{K}, \frac{2N}{K}, \ldots, \frac{(K-1)N}{K}\right\}$ , we have $\frac{KM}{N} \in \{1,2,\ldots, K-1\}$.
Therefore, we have  $g'_k\left(\frac{KM}{N}\right) \leq 0$.
Thus, we complete proving that $\theta^*$ satisfies (ii.a).
\textcolor{black}{Next, we show that $\boldsymbol \eta^*$ satisfies  (ii.b)} by proving
\begin{align}
g_k(s)-\left(g_k\left(\frac{KM}{N}\right)+g'_k\left(\frac{KM}{N}\right)\left(s-\frac{KM}{N}\right)\right) \geq 0. \label{eqn:diff_g}
\end{align}
\textcolor{black}{Consider the following four cases.}

1) When $s \in [0,k)$ and $\frac{KM}{N} \in (0,k)$, \eqref{eqn:diff_g} is equivalent to
\begin{align}
h_k(s) \geq  h_k\left(\frac{KM}{N}\right)+h'_k\left(\frac{KM}{N}\right)\left(s-\frac{KM}{N}\right). \label{eqn:diff_h}
\end{align}
Since for any \textcolor{black}{convex set $\mathcal{X}$ and any two points $x, y \in \mathcal{X}$,}  $f(y) \geq f(x) + f'(x)(y-x)$ if and only if $f(x)$ is convex~\cite{Boyd},
we prove \eqref{eqn:diff_h}  by proving that $h_k(s)$ is convex over $s \in [0,k)$.
Since $h''_k(s) =h_k(s) \left(\left(\sum_{i=0}^{K-k}\frac{1}{K-s-i}\right)^2-\sum_{i=0}^{K-k}\frac{1}{\left(K-s-i\right)^2}\right) \geq 0$ for all $s \in [0,k)$, by second-order condition for convexity~\cite{Boyd}, we know that  $h_k(s)$ is convex over $s \in [0,k)$.

2) When $s \in [0,k)$ and $\frac{KM}{N} \in [k,K)$, \eqref{eqn:diff_g} is equivalent to $h_k(s) \geq 0$, which  holds for \textcolor{black}{all} $s \in [0,k)$.

3) When $s \in [k,K]$ and $\frac{KM}{N} \in (0,k)$, \eqref{eqn:diff_g} is equivalent to
$\sum_{i=0}^{K-k}\frac{s-\frac{KM}{N}}{K-\frac{KM}{N}-i} \geq 1,$
which always holds since $\sum_{i=0}^{K-k}\frac{s-\frac{KM}{N}}{K-\frac{KM}{N}-i} \geq \sum_{i=0}^{K-k}\frac{k-\frac{KM}{N}}{K-\frac{KM}{N}-i}=1+\sum_{i=0}^{K-k-1}\frac{k-\frac{KM}{N}}{K-\frac{KM}{N}-i} \geq 1$.

4) When $s \in [k,K]$ and $\frac{KM}{N} \in [k,K)$, \eqref{eqn:diff_g} is equivalent to $0 \geq 0$, which always holds.

Thus,  we complete proving that $\boldsymbol \eta^*$ satisfies  (ii.b).

\item  Prove that $\mathbf{z}^*, \boldsymbol \eta^*$ satisfies (iii.a) and $\mathbf{z}^*, \theta^*$ satisfies (iii.b).
\textcolor{black}{First, we show that $\mathbf{z}^*, \boldsymbol \eta^*$ satisfies (iii.a).} Since $\eta^*_{\frac{KM}{N}}=0$ and $z^*_s=0$ for all $s \in \{0,1,\ldots, K\}\setminus \left\{\frac{KM}{N}\right\}$,  we know that $\eta^*_{s} \left(-z_{s}\right)=0$  for all $s \in \{0,1,\ldots,K\}$. Thus, we complete proving that $\mathbf{z}^*, \boldsymbol \eta^*$ satisfies~(iii.a).
\textcolor{black}{Next, we show that $\mathbf{z}^*, \theta^*$ satisfies (iii.b).}  Since  $\sum_{s=0}^{K} {K \choose s}s z^*_{s}=\frac{KM}{N}$, we have
$\theta^*  \left(\sum_{s=0}^{K} {K \choose s}s z^*_{s}-\frac{KM}{N}\right)=0$.
    Thus, we complete proving that $\mathbf{z}^*, \theta^*$ satisfies (iii.b).

\item  Prove that $\mathbf{z}^*, \boldsymbol \eta^*, \nu^*$ and $\theta^*$ satisfies (iv). For any $s \in \{0,1,\ldots,K\}$, we have
$$\frac{\partial L}{\partial z_{s}}(\mathbf{z}^*, \boldsymbol \eta^*,  \theta^*, \nu^*)=\sum_{u=1}^{\min \{K,N\}} P''_{u} \left({K \choose s+1}-{K-u \choose s+1}-\sum_{k=K-u+1}^{K}g_k(s){K \choose s}\right).$$
By Pascal's identity, i.e., ${{k+1} \choose t}={k \choose t}+{k \choose {t-1}}$, we have  ${K \choose s+1}-{K-u \choose s+1}=\sum_{k=K-u+1}^{K} {k-1 \choose s}$.
Furthermore, for any $s \in \{0,1,\ldots,K\}$,  we have $g_k(s)=\frac{{k-1 \choose s}}{{K \choose s}}$.
Therefore, for any $s \in \{0,1,\ldots,K\}$,  we have $\frac{\partial L}{\partial z_{s}}(\mathbf{z}^*, \boldsymbol \eta^*,  \theta^*, \nu^*)=0$.
Thus, we complete proving that $\mathbf{z}^*, \boldsymbol \eta^*, \nu^*$ and $\theta^*$ satisfies (iv).
\end{itemize}
\textcolor{black}{Combining the above three cases}, we know that $\mathbf z^*$,  $\boldsymbol \eta^*$,  $\theta^*$ and  $\nu^*$  satisfy the KKT conditions in (i)-(iv).
Therefore, we complete the proof of Lemma~\ref{Lem:Ali}.

\section*{Appendix D: Proof of Lemma~\ref{Lem:SimpSubpackCons}}
Under   Conditions~\ref{Con:symmetry} and~\ref{Con:popularity},
for the simplified  problem,   the average load,  the file partition constraints,   and  the cache memory constraint are the same as those of Problem~\ref{Prob:simplify_2}.
It remains to  transform  the   subpacketization level  constraint  in~\eqref{eqn:subpack} in terms of  the vector  $\mathbf{x}$ to \eqref{eqn:subpack_eq_y}  in terms of  the vector   $\mathbf{y}$.
First, by Theorem~1 of~\cite{dc2017},  the  subpacketization  level  constraint in~\eqref{eqn:subpack}  is equivalent to
\begin{align}
\| \mathbf{x}_n \|_{lgst, \widehat{F}} \geq \| \mathbf{x}_n \|_1, \ n \in \mathcal N. \label{eqn:DC_x}
\end{align}
By the file partition constraint in \eqref{eqn:X_sum}, we have
\begin{align}
\| \mathbf{x}_n \|_1= \sum_{s=0}^{K}\sum_{\mathcal{S}\in \{\mathcal{\widehat{S}} \subseteq \mathcal{K}: |\mathcal{\widehat{S}}|=s\}}x_{n, \mathcal{S}}=1, \ n \in \mathcal N. \label{eqn:x_norm_N}
\end{align}
\textcolor{black}{By \eqref{eqn:DC_x} and \eqref{eqn:x_norm_N}, we have}
\begin{align}
\| \mathbf{x}_n \|_{lgst, \widehat{F}} \geq 1, \ n \in \mathcal N. \label{eqn:DC_x2}
\end{align}
\textcolor{black}{Next, under Condition~\ref{Con:symmetry}, it is clear that}
$\mathbf{x}_n=U_n\mathbf{y}, \ n \in \mathcal N$.
\textcolor{black}{Thus, by \eqref{eqn:DC_x2}, we can obtain~\eqref{eqn:subpack_eq_y}.}
Therefore, we complete the proof of Lemma~\ref{Lem:SimpSubpackCons}.

\section*{Appendix E: Proof of Lemma~\ref{Lem:subgra}}
Let $\mathbf v$ be an $L$-dimensional vector.
%By~\cite{dc2017},
\textcolor{black}{Denote} $\mathbf{f}^m(\mathbf  v) \triangleq (f_j^{m,\mathbf v})_{j \in \left\{1,2,\ldots, L\right\}}$,  where
\begin{align}
&f_{[j]}^{m,\mathbf v}=\begin{cases}
1, & j \in \{1, \ldots, m\} \\
0, &\text{otherwise}
\end{cases}, \ m \in \{1,2, \ldots, L\},
\end{align}
and $[j]$ represents the index of the $j$-th largest element in  $\mathbf{v}$.
\textcolor{black}{By~\cite{dc2017}, we know that} $\mathbf{g}(\mathbf y)  = U_n^T \mathbf{f}^{\widehat{F}}\left(U_n\mathbf{y}\right)$ is a subgradient of $\|U_n\mathbf{y} \|_{lgst, \widehat{F}}$.
In the following, we first calculate $\mathbf{f}^{\widehat{F}}\left(U_n\mathbf{y}\right)$.
Let $y_{n,s_{[i]}}$ denote the $i$-th largest element in $\mathbf{y}_n$, where   $s_{[i]} \triangleq  [i]-1$.
Since $U_n\mathbf{y}$ is equivalent to $\mathbf{x}_n$  under Condition~\ref{Con:symmetry},  the $\left(\sum_{i'=1}^{i-1}{K \choose s_{[i]}}+1\right)$-th to the   $\left(\sum_{i'=1}^{i}{K \choose s_{[i]}}\right)$-th largest elements in $U_n\mathbf{y}$  all equal   to $y_{n,s_{[i]}}$.
Thus, the set of the  indices of the  $\widehat{F}$-th largest  element  in $U_n\mathbf{y}$ is
\begin{align} \label{eqn:omega}
 \Omega^{\widehat{F},U_n\mathbf{y}} = &\underset{{i \in \{1, \ldots, I-1\}}}\bigcup\left\{\sum_{s'=0}^{s_{[i]}-1}{K \choose s'}+1, \ldots, \sum_{s'=0}^{s_{[i]}}{K \choose s'}\right\}\nonumber \\
&\bigcup \left\{\sum_{s'=0}^{s_{[I]}-1}{K \choose s'}+1, \ldots, \sum_{s'=0}^{s_{[I]}-1}{K \choose s'}+\widehat{F}-\sum_{i=1}^{I-1} {K \choose s_{[i]} }\right\}.
\end{align}
Thus, \textcolor{black}{we have} $\mathbf{f}^{\widehat{F}}\left(U_n\mathbf{y}\right)\triangleq (f_j^{\widehat{F},U_n\mathbf{y}})_{j \in \left\{1,2,\ldots, 2^K\right\}}$,  where
\begin{align} \label{eqn:fj}
&f_{j}^{\widehat{F},W_n\mathbf{y}}=\begin{cases}
1, & j \in  \Omega^{\widehat{F},U_n\mathbf{y}} \\
0, &\text{otherwise}
\end{cases}.
\end{align}
Next, we calculate $U_n^T$.
Let $U_n^T \triangleq (u_{m,h})_{m \in \{1,2, \ldots, (K+1)N\},h \in \{1,2, \ldots,  2^K\}}$,   and $l_m \triangleq (m-1) \mod (K+1)$.
From the definition of $U_n$, we know that for any $m \in \{1,2,\ldots, N(K+1)\}$,
\begin{align}  \label{eqn:W_T_2}
u_{m,h} =\begin{cases}
1, & m=n, \ h \in \left\{\sum_{l=0}^{l_m-1}{K \choose l}+1, \ldots, \sum_{l=0}^{l_m}{K \choose l}\right\}\\
0, &\text{otherwise}
\end{cases}.
\end{align}
By  \eqref{eqn:omega}, \eqref{eqn:fj} and \eqref{eqn:W_T_2},  we can obtain   $\mathbf{g}_n(\mathbf y) =U_n^T \mathbf{f}^{\widehat{F}}\left(U_n\mathbf{y}\right)$,   indicating  \eqref{eqn:g_t}.
Therefore, we complete the proof of Lemma~\ref{Lem:subgra}.
\bibliography{reference}
\end{document}